\newtheorem{definition}{Definition}
\newtheorem{theorem}{Theorem}
\newtheorem{proposition}{Proposition}
\newtheorem{lemma}{Lemma}
\newtheorem{corollary}{Corollary}
\newtheorem{remark}{Remark}
\newtheorem{problem}{Problem}
\title{Supermodular Locality Sensitive Hashes}
\author{
Maxim Berman \qquad Matthew B.\ Blaschko\\
Center for Processing Speech \& Images\\
Departement Elektrotechniek, KU Leuven\\
Kasteelpark Arenberg 10\\
3001 Leuven, Belgium\\
\texttt{\{maxim.berman,matthew.blaschko\}@esat.kuleuven.be} \\
}
\begin{document}

\maketitle

\begin{abstract}
In this work, we show deep connections between Locality Sensitive Hashability and submodular analysis. 
We show that the LSHablility of the most commonly analyzed set similarities is in one-to-one correspondance with the supermodularity of these similarities when taken with respect to the symmetric difference of their arguments. 
We find that the supermodularity of equivalent LSHable similarities can be dependent on the set encoding. 
While monotonicity and supermodularity does not imply the metric condition necessary for supermodularity, this condition is guaranteed for the more restricted class of supermodular Hamming similarities that we introduce. 
We show moreover that LSH preserving transformations are also supermodular-preserving, yielding a way to generate families of similarities both LSHable and supermodular. 
Finally, we show that even the more restricted family of cardinality-based supermodular Hamming similarities presents promising aspects for the study of the link between LSHability and supermodularity. 
We hope that the several bridges that we introduce between LSHability and supermodularity paves the way to a better understanding both of supermodular analysis and LSHability, notably in the context of large-scale supermodular optimization. 
\end{abstract}

\section{Introduction}

Locality sensitive hashing is a frequently employed scheme for large scale similarity comparison.  Given a similarity that compares two objects, locality sensitive hashing replaces the exact pairwise computation by a distribution over hash functions for which the expected collision rate between the hashes of two objects is equal to their similarity \cite{Broder1997,BRODER2000630}.  The set of functions and associated distribution is called an LSH, and is advantageous as approximation is straightforward by subsampling the hash functions, and fast schemes exist for comparing binary strings.  It is well known that not all conceivable similarity functions admit an LSH, and the characterization of the set of LSHable functions remains incomplete.

In this work, we show deep connections between LSHability and submodular analysis \cite{fujishige2005submodular}, in particular that for the most commonly analyzed similarities, there is a one-to-one relationship between LSHability and supermodularity of similarities between sets when taken with respect to the symmetric difference of their arguments.  We therefore explore more deeply the relationship between LSHability and supermodularity.  

While we have not found an example that disproves the possibility that the concepts of LSHability and supermodularity are essentially equivalent (key result \ref{item:OneToOneLSHabilitySupermodularity}), we have shown that one can provide alternate set encodings of the same similarity where supermodularity of the resulting set similarity is not guaranteed (key result \ref{item:NonUniquenessEncoding}).  Furthermore, supermodularity alone is not sufficient to ensure that the resulting similarity yields a metric (key result \ref{item:supermodularGenerallyNotMetric}), necessitating the addition of metric constraints to the definition of a \emph{metric supermodular similarity}.  On the contrary, we have demonstrated that LSH preserving functions are supermodular preserving functions and one can consider a closed class of similarities obeying both properties (key result \ref{item:LSHpreservingSupermodularPreserving}). We therefore have introduced a more tractable restricted family of supermodular similarities for which metric conditions are always fulfilled (key result \ref{item:SupermodularHammingSimilarities}).  Finally, we then use this family to identify a non-trivial set of supermodular similarities characterized by convex functions for which known LSHable functions form a strict subset (key result \ref{item:CSHS}).

Testing the submodularity of set functions and the LSHability of similarities are both NP hard problems \cite{CHIERICHETTI201489,seshadhri2014submodularity}. However efficient schemes, independent of the domain size, exist for testing $\varepsilon$-approximate submodularity in the $\ell_p$ sense for $p \geq 1$ \cite{blais2016testing}. To the best of our knowledge, similar results do not exist concerning $\varepsilon$-approximate testing of LSH-ability. 
This observation further motivates our present contribution, in the hope that the connections between LSHability and supermodularity lead to novel approximation schemes for LSHability.

More immediate to practical applications are that both submodularity and LSHability are frequently motivated by approximations and optimization schemes for similar applications. These include near duplicate detection by similarity maximization using a LSH \cite{Ke:2004:EPN:1027527.1027729} and the related problem of diverse $k$-best, which corresponds to supermodular similarity minimization \cite{NIPS2014_5415}.  Many clustering algorithms, including $k$-means, use distance metrics in their formulation \cite{Deza2009}, which can then utilize submodular properties in their optimization, or be approximated by an LSH to achieve sub-quadratic solutions.  In these and related settings, recognizing that LSHability and supermodularity are at least frequently co-occurring properties indicates the possibility to exploit both simultaneously, yielding more efficient and accurate algorithms.

\subsection{Summary of Key Results}

In this paper, we pose the question of the precise relationship between LSHability and supermodularity, showing an intriguing overlap between the two properties.  We additionally show a number of results that elucidate some boundaries of the relationship between LSHability and supermodularity, while demonstrating promising avenues to advance the theory of the relationship between LSHability and supermodularity:
\begin{enumerate}
\item \label{item:OneToOneLSHabilitySupermodularity} A one-to-one relationship between LSHability (\thref{def:LSHability}) and supermodularity (\thref{def:SupermodularSimilarity}) for the most commonly studied similarities (Table~\ref{tab:setSimilarityMeasures});
\item \label{item:NonUniquenessEncoding} The non-uniqueness of a set encoding can lead to equivalent LSHable similarities being supermodular or not (\thref{thm:IntersectionSimilarity} and \thref{thm:IdentityIntersectionSimilarity}; \thref{thm:EncodingSupermodularOrNot});
\item \label{item:SupermodularHammingSimilarities} A construction of supermodular Hamming similarities, a class of supermodular similarities for which metric conditions are always guaranteed (\thref{thm:SupermodularHammingSimilarityMetric}), from arbitrary submodular or supermodular functions  (\thref{thm:SupermodularHammingSimilarityConstruction});
\item \label{item:supermodularGenerallyNotMetric} A proof that supermodularity (\thref{def:SupermodularSimilarity}) is not sufficient to guarantee that a similarity yields a metric (\thref{thm:SupermodularityNotImpliesMetric}), which is a necessary condition for LSHability, and a subsequent definition of metric supermodular similarities (\thref{def:MetricSupermodularSimilarity});
\item \label{item:LSHpreservingSupermodularPreserving} LSH-preserving functions \cite{Chierichetti:2012:LFA:2095116.2095201} are supermodular-preserving functions  (Section~\ref{sec:LSHpreservingIsSupermodularPreserving});
\item \label{item:CSHS} Cardinality-based supermodular Hamming similarities (characterized by a single convex function) are a strict superset of LSH-preserving functions applied to the Hamming similarity, providing a promising entry-point to answering deep questions about the relationship between LSHability and submodularity on the basis of convex analysis (Section~\ref{sec:SymmetricSupermodularHammingSimilarities}).
\end{enumerate}

The rest of the paper is organized as follows: We formally introduce open problems in Section~\ref{sec:OpenProblems}.
We then provide an introduction to submodularity (Section~\ref{sec:Submodularity}), and demonstrate its relationship to commonly employed set similarities (Section~\ref{sec:SubmodularityofSetSimilarity}).  Interestingly, a one-to-one relationship emerges between supermodular similarities and LSHable similarities.
Metric properties of supermodular similarities are discussed in Section~\ref{sec:MetricSupermodularSimilarities}, showing that supermodular Hamming similarities yield a metric, while metric conditions do not necessarily hold for more general submodular similarities.
We subsequently demonstrate in Section~\ref{sec:LSHpreservingIsSupermodularPreserving} that LSH-preserving functions are supermodularity-preserving functions.  Finally, we introduce cardinality-based supermodular Hamming similarities (Section~\ref{sec:SymmetricSupermodularHammingSimilarities}) as a method for employing convex analysis to illuminate the relationship between supermodulariity and LSHability.

\subsection{Open Problems}\label{sec:OpenProblems}

Consider the set of LSHable similarities (\thref{def:LSHability}), and let us denote this by $\mathcal{L}$.  Optionally, consider  $\mathcal{L} \ni S : \mathcal{X} \times \mathcal{X} \rightarrow [0,1]$ where $|\mathcal{X}|$ is a power of 2.

Our first set of open problems address whether a supermodular similarity is necessarily LSHable.  We begin with a more restricted problem (\thref{prob:LSHsymmetricSupermodularHammingSimilarity}) building to a desired general result (\thref{prob:doesSupermodularImplyLSHable}).
\begin{problem}[LSHability of cardinality-based supermodular Hamming similarities?]\thlabel{prob:LSHsymmetricSupermodularHammingSimilarity}
Are all cardinality-based supermodular Hamming similarities LSHable?  If not, what are necessary and sufficient conditions for a cardinality-based supermodular Hamming similarity to be LSHable?
\end{problem}
The background to \thref{prob:LSHsymmetricSupermodularHammingSimilarity} is that cardinality-based supermodular Hamming similarities (CSHSs) are a very special case of supermodular similarities that are parametrized by a single convex function (Section~\ref{sec:SymmetricSupermodularHammingSimilarities}).  Through \thref{thm:LSHpreservingIsSupermodularPreserving} it is straightforward to characterize a rich subset of CSHSs that are LSHable (probability generating functions applied to the Hamming similarity), but it is clear that this subset does not encompass all CSHSs (\thref{thm:SSHMbiggerthanLSHP-H}).  Due to the simple characterization of CSHSs by a 1D convex function, study of this family promises to shed light on the larger problem while making use of the rich set of tools available from convex analysis.  A negative answer to this question will partially answer \thref{prob:doesSupermodularImplyLSHable}, while a positive answer will provide the first results about novel LSHable similarities originating in submodular analysis.

\begin{problem}[When does supermodular imply LSHable?] \thlabel{prob:doesSupermodularImplyLSHable}
For $S$ a metric supermodular similarity (\thref{def:MetricSupermodularSimilarity}), what are necessary and sufficient conditions for $S\in \mathcal{L}$?  Can one find a constructive proof that implies a polynomial time algorithm for a LSH given a metric supermodular similarity (possibly satisfying additional sufficient conditions)?
\end{problem}

Analogous to \thref{prob:doesSupermodularImplyLSHable}, we ask when LSHability implies supermodularity.  This work goes some limited way in answering this question for some specific (families of) similarities (Table~\ref{tab:setSimilarityMeasures}).  However, this does not address the more general characterization of LSHable similarities outside of the limited families analyzed here.  The following set of problems target this by addressing when the LSHability of a similarity implies its supermodularity (arguably a better understood property).  This would potentially add powerful tools to the analysis of LSHable similarities, for which the most useful results currently are necessary (but not sufficient) metric conditions \cite{Charikar2002SET} and that LSHable similarities are $\ell_{1}^{\mathcal{O}(|\mathcal{X}|^2)}$ embeddable \cite[Lemma~5.1]{Chierichetti:2012:LFA:2095116.2095201} .

\begin{problem}[LSHable similarities where $|\mathcal{X}|$ is a power of 2]\thlabel{prob:LSHableSizeUniversePowerOf2}
Consider $S : \mathcal{X} \times \mathcal{X} \rightarrow [0,1]$ such that $|\mathcal{X}| = 2^p$ for some $p \in \mathbb{Z}_+$.  This implies a natural mapping to a set similarity with a base set $V$ where $|V|=p$.  In this setting, what are necessary and sufficient conditions for $S \in \mathcal{L}$ to be a supermodular similarity (\thref{def:SupermodularSimilarity})?
\end{problem}
The background to \thref{prob:LSHableSizeUniversePowerOf2} is our analysis of the intersection similarity \cite[Definition~3.14]{Chierichetti:2012:LFA:2095116.2095201}.  \thref{thm:IntersectionSimilarity} and \thref{thm:IdentityIntersectionSimilarity} together indicate the importance of the specific encoding chosen in converting an LSHable similarity over a more general domain to a set similarity.  One encoding results in a set similarity that is neither submodular nor supermodular, while the other (only valid for $|\mathcal{X}|$ a power of 2) yields a well behaved supermodular similarity.  Indeed, other recent work analyzes ``hidden'' submodularity, and in which cases a transformation of the set encoding exists that maps to a submodular function \cite{DBLP:journals/corr/abs-1712-08721}.  Such analysis is trivially applicable to supermodular functions.

Our analysis in \thref{thm:IntersectionSimilarity} and \thref{thm:IdentityIntersectionSimilarity} exposes interesting dependencies between the encoding and pseudometric and monotonic properties of the resulting set function.  This motivates
\thref{prob:PseudometricLSHableSubmodular}.  We have included a monotonicity condition in our definition of a supermodular similarity (\thref{def:SupermodularSimilarity}, condition~\ref{item:SupermodularSimilarityMonotonicCondition}), but this is used primarily in \thref{lemma:productNonNegMonotonicSupermodular} to show that LSH-preserving functions also preserve supermodularity (\thref{thm:LSHpreservingIsSupermodularPreserving}).  We ask next when monotonicity is actually necessary, and the implications of $1-S$ being only a pseudometric.
\begin{problem}[Pseudometrics and monotonicity]\thlabel{prob:PseudometricLSHableSubmodular}
What are the combined implications of pseudometrics and similarities that are (non)monotonic in the symmetric difference of their arguments?
\end{problem}

\thref{prob:LSHableSizeUniverseMoreGenerally} addresses the setting
for which the domain of $S$ does not have an obvious mapping to a set similarity.  Are there nevertheless strategies for exploiting submodular analysis in constructing an equivalent set similarity?

\begin{problem}[LSHable similarities more generally]\thlabel{prob:LSHableSizeUniverseMoreGenerally}
Consider $S : \mathcal{X} \times \mathcal{X} \rightarrow [0,1]$ such that $|\mathcal{X}| \neq 2^p$ for some $p \in \mathbb{Z}_+$.  This no longer implies a natural mapping to a set similarity with a base set $V$ where $|V|=p$.  In this setting, what are necessary and sufficient conditions for a set encoding of such an $S \in \mathcal{L}$ to be a supermodular set similarity (\thref{def:SupermodularSimilarity})?
\end{problem}

\section{Submodularity}\label{sec:Submodularity}

In this section, we introduce the main mathematical objects from submodular analysis necessary in the sequel.

\begin{definition}[Set function \cite{Schrijver2003combinatorial}]
A set function $\ell$ is a mapping from the power set of a base set $V$ to the reals:
\begin{equation}
\ell : \mathcal{P}(V) \rightarrow \mathbb{R} .
\end{equation}
\end{definition}

\begin{definition}[Submodular set function \cite{fujishige2005submodular}]\thlabel{def:Submodular}
  A set function $f$ is said to be submodular if for all $A \subseteq B \subset V$ and $x\in V\setminus B$,
  \begin{equation}\label{eq:SubmodularDefinitionInequality}
    f(A\cup \{x\}) -f(A) \geq f(B \cup \{x\}) - f(B) .
  \end{equation}
\end{definition}
A set function is said to be supermodular if its negative is submodular, and a function is said to be modular if it is both submodular and supermodular.   
Additional properties of submodular functions can be found in \cite{fujishige2005submodular}.

An alternative characterization of the submodularity of set functions useful in practice is given by the second-order differences \cite[Proposition~1.2]{Bach13fot}.
\begin{theorem}[Submodularity with second-order differences]\thlabel{thm:Submodular2ndOrder}
  A set function $f$ is submodular if and only if for all $A \subset V$ and $s, t\in V\setminus A$,
  \begin{equation}\label{eq:SubmodularSecondOrder}
    f(A\cup \{t\}) -f(A) \geq f(A \cup \{s, t\}) - f(A \cup \{s\}) .
  \end{equation}
\end{theorem}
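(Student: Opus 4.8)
The plan is to establish the two implications separately: the forward direction is an immediate specialization of \thref{def:Submodular}, while the reverse direction, which is the substantive part, follows from a telescoping argument.

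For the ``only if'' direction, I would assume $f$ is submodular in the sense of \thref{def:Submodular} and fix $A \subset V$ together with distinct $s, t \in V \setminus A$. Applying inequality~\eqref{eq:SubmodularDefinitionInequality} with $B := A \cup \{s\}$ (which satisfies $A \subseteq B \subset V$) and with added element $x := t$ (legitimate since $t \notin A \cup \{s\}$) yields exactly~\eqref{eq:SubmodularSecondOrder}.

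For the ``if'' direction, I would assume~\eqref{eq:SubmodularSecondOrder} for all admissible $A, s, t$ and fix $A \subseteq B \subset V$ with $x \in V \setminus B$. Enumerate $B \setminus A = \{s_1, \dots, s_k\}$ and set $B_0 := A$ and $B_i := A \cup \{s_1, \dots, s_i\}$, so that $B_k = B$. Introducing the marginal-gain function $g(C) := f(C \cup \{x\}) - f(C)$, the desired inequality~\eqref{eq:SubmodularDefinitionInequality} reads $g(B_0) \geq g(B_k)$. I would write this as the telescoping sum $g(B_0) - g(B_k) = \sum_{i=1}^{k} \bigl( g(B_{i-1}) - g(B_i) \bigr)$ and check that each summand is nonnegative: $g(B_{i-1}) - g(B_i)$ equals $\bigl[ f(B_{i-1} \cup \{x\}) - f(B_{i-1}) \bigr] - \bigl[ f(B_{i-1} \cup \{s_i, x\}) - f(B_{i-1} \cup \{s_i\}) \bigr]$, which is $\geq 0$ by~\eqref{eq:SubmodularSecondOrder} applied with $A \mapsto B_{i-1}$, $s \mapsto s_i$, $t \mapsto x$. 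Summing gives the claim.

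The hard part will be the bookkeeping in the reverse direction: at each step one must verify that $s_i$ and $x$ are distinct and both lie outside $B_{i-1}$, so that~\eqref{eq:SubmodularSecondOrder} genuinely applies, and one should dispatch the degenerate case $A = B$ (empty telescoping sum) separately. I would also note that the statement is intended for distinct $s, t$; if $s = t$ one reads $A \cup \{s,t\}$ as $A \cup \{s\}$ and~\eqref{eq:SubmodularSecondOrder} becomes trivial, so nothing is lost by restricting to $s \neq t$, which is the form used elsewhere in the paper.
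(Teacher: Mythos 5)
Your proof is correct, and it fills a gap the paper leaves open: the paper states this theorem with only a citation to Bach's monograph (Proposition~1.2 there) and gives no argument of its own. Your two directions are the standard ones — the ``only if'' part is indeed just \thref{def:Submodular} with $B = A \cup \{s\}$ and $x = t$, and the telescoping of the marginal gain $g(C) = f(C \cup \{x\}) - f(C)$ along the chain $B_0 \subseteq B_1 \subseteq \dots \subseteq B_k$ is exactly the classical argument; the bookkeeping you worry about works out, since each $s_i$ lies in $B \setminus B_{i-1}$ and $x \notin B$ forces $x \neq s_i$ and $x \notin B_{i-1}$, so~\eqref{eq:SubmodularSecondOrder} applies at every step, and the case $A = B$ is vacuous.

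One small correction to your closing remark: the case $s = t$ is not trivial, and the distinctness restriction is not merely cosmetic. If $s = t$ then the right-hand side of~\eqref{eq:SubmodularSecondOrder} collapses to $0$ and the inequality reads $f(A \cup \{t\}) \geq f(A)$, i.e.\ it would additionally impose monotonicity — a condition that submodular functions need not satisfy (take the modular function $f(A) = -|A|$). So with $s = t$ allowed, the ``only if'' direction of the theorem would be false; the statement must be read, as in Bach, with $s \neq t$, and your proof correctly uses only distinct pairs. The error is confined to that side comment and does not affect the argument itself.
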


\section{Submodularity of Set Similarities}\label{sec:SubmodularityofSetSimilarity}

Set similarity measures, such as the Jaccard index, are functions that accept two sets and output (normalized) similarities based on the overlap of elements between these sets, $S : \mathcal{P}(V)^2 \rightarrow \mathbb{R}_+$.  These can also be used to construct dissimilarities e.g.\ by taking one minus the similarity measure for similarity measures normalized between zero and one.  An interesting mathematical property is the submodularity of the measure with respect to the symmetric difference of its arguments.  This has recently been highlighted e.g.\ in the construction of loss surrogates for set prediction \cite{yu:hal-01151823,Berman2018a}.
We will denote the symmetric difference:
\begin{equation}
X \triangle Y := (X\cup Y) \setminus (X\cap Y) = (X\setminus Y) \cup (Y \setminus X).
\end{equation}
In the sequel, we assume that $X$ is fixed, and we will analyze the submodularity of similarity measures including those in Table~\ref{tab:setSimilarityMeasures}.  

\begin{definition}[Similarity]\thlabel{def:similarity}
A function $S : \mathcal{X} \times \mathcal{X} \rightarrow [0,1]$ is called a similarity if
\begin{enumerate}
\item $S(X,X) = 1$;
\item $S(X,Y) = S(Y,X)$.
\end{enumerate}
\end{definition}

For the similarities in Table~\ref{tab:setSimilarityMeasures}, there is a high overlap between supermodularity and LSHability of the measure.  By supermodularity, we mean the following:
\begin{definition}\thlabel{def:SupermodularSimilarity}
A similarity $S$ is said to be supermodular if, holding one argument fixed, the resulting set function of its symmetric difference $f_X: A \mapsto S(X,X\triangle A)$ satisfies the following conditions:
\begin{enumerate}
\item $f_X$ supermodular; \label{symmetricDiffConditionSupermodular}
\item monotonically decreasing, i.e. $f_X(A) \geq f_X(B)$ for all $A \subseteq B$. \label{item:SupermodularSimilarityMonotonicCondition}
\end{enumerate}
\end{definition}
We note that these conditions are equivalent to $1-f_X$ being a polymatroid rank function \cite[Section~2.3]{Bach13fot} with maximum value less than or equal to one.  Furthermore, from the symmetry of $S$, we have that for all $X, Y \subseteq V$,
\begin{align}
f_X(Y) = f_{X\triangle Y}(Y) .
\end{align}

To relate these function values to \thref{def:Submodular}, we consider sets $A, B$ such that $A\subseteq B$ and we denote sets sets $Y := X\triangle A$ and $\tilde{Y} := X\triangle B$, such that $A = X\triangle Y$ and $B = X\triangle \tilde{Y}$.
We have that $|X\cap Y| \geq |X \cap \tilde{Y}|$. We will denote 
$|X \setminus Y| = \alpha$, $|Y \setminus \tilde{Y}| = \beta$, $|X \cap \tilde{Y}| = \zeta$, $|Y \setminus X| = \delta$, and $|\tilde{Y} \setminus Y| = \varepsilon$, 
where in the case that $X\triangle Y \subseteq X \triangle \tilde{Y}$ all Greek variables are unconstrained except that they must be non-negative integers (Figure~\ref{fig:SymmetricDifferenceSupersetFigureGreek}).  We then have the following equalities:
\begin{align}
|X| =& \alpha + \beta + \zeta \label{eq:GreekSizeDefinitionsFirst}\\
|Y| =& \beta + \zeta + \delta\\
|\tilde{Y}| =& \zeta + \delta + \varepsilon \\
|X \cap Y| =& \beta + \zeta\\
|X \cap \tilde{Y}| =& \zeta\\
|X \triangle Y| =& \alpha + \delta\\
|X \triangle \tilde{Y}| =& \alpha + \beta + \delta + \varepsilon\\
|\overline{X \cup Y}| =& |V| - (\alpha + \beta + \zeta + \delta) = \eta + \varepsilon \\
|\overline{X \cup \tilde{Y}}| =& |V| - (\alpha + \beta + \zeta + \delta + \varepsilon) = \eta \label{eq:GreekSizeDefinitionsLast}
\end{align}

\begin{figure}
\centering
\includegraphics[width=0.6\textwidth]{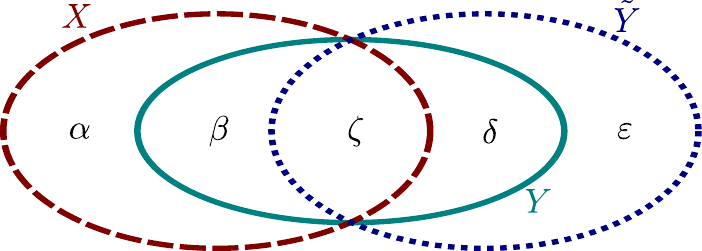}
\caption{Illustration of the sets used in the construction of Equations~\eqref{eq:GreekSizeDefinitionsFirst}-\eqref{eq:GreekSizeDefinitionsLast} (the greek letters represent cardinalities of the corresponding subsets).}\label{fig:SymmetricDifferenceSupersetFigureGreek}
\end{figure}

We begin by considering the set of similarities described in \cite[Table~1]{chierichetti_et_al:LIPIcs:2017:8168}.  These are reproduced here in Table~\ref{tab:setSimilarityMeasures} along with the intersection similarity due to \cite[Definition~3.14]{Chierichetti:2012:LFA:2095116.2095201}.  The Jaccard index has previously been analyzed, specifically the loss version has been shown to be submodular \cite[Proposition~11]{Yu2015b} indicating that the similarity is supermodular.  The Hamming similarity can readily be shown to be modular as it is simply $1-\frac{|X\triangle Y|}{|V|}$ and $|V|$ is fixed.  In the remainder of this section we systematically analyze other similarities showing a one-to-one relationship between a similarity being LSHable and supermodular.

\begin{table}
\caption{Similarity measures from \cite[Table~1]{chierichetti_et_al:LIPIcs:2017:8168} and \cite[Definition~3.14]{Chierichetti:2012:LFA:2095116.2095201}. LSHability results can be found in these references. We note that the cardinality intersection and identity intersection are different set encodings of the same similarity, indicating that supermodularity is a property in part of the encoding of the set similarity while LSHability is rather a property of the underlying metric. In all cases, we have a one to one correspondence between the existence of a supermodular encoding of the similarity and its LSHability.}\label{tab:setSimilarityMeasures}
\centering
\begin{tabular}{p{0.2\textwidth}cp{0.35\textwidth}c}
\textbf{name} & $S(X,Y)$ ($X\neq Y$) & Submodularity w.r.t.\ $X \triangle Y$ & LSHable\\
\hline
Jaccard & $\frac{|X\cap Y|}{|X\cap Y| + |X \triangle Y|}$ & Supermodular \cite[Proposition~11]{Yu2015b} & yes\\
Hamming & $\frac{|X\cap Y| + |\overline{X \cup Y}|}{|X\cap Y| + |\overline{X \cup Y}| + |X \triangle Y|}$ & Modular (Section~\ref{sec:SubmodularityofSetSimilarity}) & yes \\
Anderberg & $\frac{|X\cap Y|}{|X\cap Y| + 2|X \triangle Y|}$ & Supermodular (\thref{thm:SorensenGamma}) & yes \\
Rogers–Tanimoto & $\frac{|X\cap Y| + |\overline{X \cup Y}|}{|X\cap Y| + |\overline{X \cup Y}| + 2|X \triangle Y|}$ & Supermodular (\thref{prop:RogersTanimotoSimilarity}) & yes \\
Simpson & $\frac{|X\cap Y|}{\min(|X|,|Y|)}$ & Neither submodular nor supermodular (\thref{thm:SimpsonSetSimilarity}) & no \\
Braun–Blanquet & $\frac{|X\cap Y|}{\max(|X|,|Y|)}$ & Neither submodular nor supermodular (\thref{thm:BraunBlanquetSimilarity}) & no \\
S{\o}rensen-Dice & $\frac{|X\cap Y|}{|X\cap Y| + \frac{1}{2} |X \triangle Y|}$ & Neither submodular nor supermodular \cite[Proposition~6]{Yu2016a} & no\\
Sokal–Sneath 1 & $\frac{|X\cap Y| + |\overline{X \cup Y}|}{|X\cap Y| + |\overline{X \cup Y}| + \frac{1}{2}|X \triangle Y|}$ & Submodular (\thref{prop:SokalSneath1Similarity}) & no \\
Forbes & $\frac{|V| \cdot |X\cap Y|}{|X|\cdot |Y|}$ &  Neither submodular nor supermodular (\thref{thm:ForbesSimilarity}) & no\\
S{\o}rensen$_\gamma$ & $\frac{|X\cap Y|}{|X\cap Y| + \gamma |X \triangle Y|}$ & Supermodular for $\gamma \geq 1$, neither submodular nor supermodular for $0<\gamma < 1$ (\thref{thm:SorensenGamma}) & iff $\gamma\geq 1$ \\
Sokal–Sneath$_\gamma$ & $\frac{|X\cap Y| + |\overline{X \cup Y}|}{|X\cap Y| + |\overline{X \cup Y}| + \gamma |X \triangle Y|}$ & Supermodular for $\gamma \geq 1$, Submoduar for $0<\gamma<1$ (\thref{prop:SokalSneathGammaSimilarity})  & iff $\gamma \geq 1$\\
Cardinality Intersection & \thref{def:CardinalityIntersectionSimilarity} & Neither submodular nor supermodular (\thref{thm:IntersectionSimilarity}) & yes \\
Identity Intersection & \thref{def:IdentityIntersectionSimilarity} & Supermodular (\thref{thm:IdentityIntersectionSimilarity}) & yes \\
\end{tabular}
\end{table}

\begin{proposition}[S{\o}rensen$_\gamma$ similarity]\thlabel{thm:SorensenGamma}
The S{\o}rensen$_\gamma$ similarity is supermodular for all $\gamma\geq 1$ and is neither submodular nor supermodular for $0<\gamma < 1$. 
\end{proposition}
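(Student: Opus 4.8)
The plan is to reduce $f_X$ to a function of two integer parameters and then apply the second-order difference criterion of \thref{thm:Submodular2ndOrder}. Writing $Y := X\triangle A$, we have $X\cap Y = X\setminus A$ and $X\triangle Y = A$, so with $k := |X|$, $m := |X\cap A|$ and $n := |A\setminus X|$,
\[
f_X(A) \;=\; g(m,n) \;:=\; \frac{k-m}{k + (\gamma-1)m + \gamma n}\,;
\]
crucially $f_X$ depends on $A$ only through $(m,n)$, and the denominator $D := k+(\gamma-1)m+\gamma n$ is strictly positive on the grid $\{0,\dots,k\}\times\{0,\dots,|V|-k\}$ whenever $X\neq\emptyset$ (using $m\le k$ and $\gamma>0$), the trivial case $X=\emptyset$ being immediate. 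Since adjoining an element to $A$ increments $m$ if the element lies in $X$ and $n$ otherwise, \thref{thm:Submodular2ndOrder} shows that supermodularity of $f_X$ is equivalent to the three discrete conditions ``$g$ convex in $m$'', ``$g$ convex in $n$'', and $g(m,n)+g(m+1,n+1)\ge g(m,n+1)+g(m+1,n)$; that submodularity is equivalent to the three reversed inequalities; and that the required monotonicity of $f_X$ (\thref{def:SupermodularSimilarity}) is exactly that $g$ be nonincreasing in each argument. Treating $g$ as a smooth function of real $(m,n)$ on $\{D>0\}$, each discrete second difference is an integral of the matching second partial derivative over a unit square or segment inside that region, so each condition reduces to the sign of a second derivative.

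Next I would carry out the (short) differentiations, which collapse pleasantly: $\partial_m g = -\gamma(k+n)D^{-2}$ and $\partial_n g = -\gamma(k-m)D^{-2}$, so $g$ is nonincreasing in both arguments for every $\gamma>0$ and the monotonicity condition holds unconditionally; moreover
\[
\partial_n^2 g = \frac{2\gamma^2(k-m)}{D^3},\qquad \partial_m^2 g = \frac{2\gamma(\gamma-1)(k+n)}{D^3},\qquad \partial_m\partial_n g = \frac{\gamma\,[(2\gamma-1)k+\gamma n-(\gamma-1)m]}{D^3}.
\]
For $\gamma\ge 1$ all three numerators are nonnegative: the first since $m\le k$; the second by inspection; the third because $(2\gamma-1)k+\gamma n-(\gamma-1)m \ge (2\gamma-1)k-(\gamma-1)k+\gamma n = \gamma(k+n)\ge 0$, again using $m\le k$. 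Hence $f_X$ is supermodular and monotone decreasing for every $X$, i.e.\ $S$ is a supermodular similarity. For $0<\gamma<1$ the sign of $\partial_m^2 g$ reverses, so $g$ is strictly concave in $m$: taking $X$ with $|X|\ge 2$ and the sets $\emptyset\subset\{t\}\subset\{s,t\}$ with $s,t\in X$ gives $g(2,0)-2g(1,0)+g(0,0)<0$, violating the supermodular second-order inequality, so $S$ is not supermodular. On the other hand $\partial_n^2 g>0$ whenever $m<k$, so $g$ is strictly convex in $n$: taking $X$ with $|X|\ge 1$ and $|V\setminus X|\ge 2$ and the sets $\emptyset\subset\{t\}\subset\{s,t\}$ with $s,t\notin X$ gives $g(0,2)-2g(0,1)+g(0,0)>0$, violating the submodular second-order inequality, so $S$ is not submodular. (A fully self-contained witness already appears at $|V|=4$, $|X|=2$.) Together with the $\gamma\ge1$ case this proves the proposition.

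I expect the main obstacle to be getting the two-parameter reduction exactly right — in particular the ``separable-coordinate'' observation that, ranging over all $A$, supermodularity of $f_X$ is governed by precisely those three second-difference conditions on $g$, because each newly adjoined element touches either $X$ or its complement but never both — together with the sign check of the mixed partial $\partial_m\partial_n g$ for $\gamma\ge1$, which is the only one of the three inequalities that genuinely uses $m\le|X|$ rather than following by inspection. The per-coordinate convexity statements and the monotonicity fall out immediately once $g$ and its first and second derivatives are written down. An alternative, purely algebraic route is to substitute the cardinality identities \eqref{eq:GreekSizeDefinitionsFirst}--\eqref{eq:GreekSizeDefinitionsLast} directly into \thref{def:Submodular}, but that is considerably more cumbersome than the computation above.
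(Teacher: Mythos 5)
Your proof is correct, and it takes a genuinely different route from the paper's. The paper works directly with the first-order characterization of \thref{def:Submodular}: it splits the increment of the symmetric difference into the same two cases you do (adjoining $y\notin X$ to $Y$, or deleting $x\in X\cap Y$ from $Y$), writes everything in the five cardinalities $\alpha,\beta,\zeta,\delta,\varepsilon$ of Figure~\ref{fig:SymmetricDifferenceSupersetFigureGreek}, and for $\gamma\ge 1$ argues that the first pair of fractions has a larger-magnitude negative difference than the second pair because its numerator is larger and its denominator smaller; its $0<\gamma<1$ counterexample ($\beta=\zeta=1$, $\alpha=\delta=\varepsilon=0$) is exactly the negative of your $g(2,0)-2g(1,0)+g(0,0)$ witness. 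You instead collapse $f_X$ to the two-parameter cardinality function $g(m,n)$, invoke \thref{thm:Submodular2ndOrder} to reduce supermodularity to three discrete second-difference inequalities, and settle each by the sign of the corresponding continuous second partial through the integral representation of discrete differences; I verified the derivatives and the bound $(2\gamma-1)k+\gamma n-(\gamma-1)m\ge\gamma(k+n)$, and the edge cases ($D>0$ on the whole rectangle for $X\neq\emptyset$, the trivial $X=\emptyset$ case) are handled. Your route buys a fully explicit verification of the false-negative case, which in the paper is dispatched by an ``analogous argument'' even though the numerator there changes as well as the denominator, and it makes visible exactly where $m\le|X|$ enters (only in the mixed partial and in $\partial_n^2 g$); it also checks the monotonicity required by condition~\ref{item:SupermodularSimilarityMonotonicCondition} of \thref{def:SupermodularSimilarity}, which the paper's proof leaves implicit. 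What the paper's route buys is staying entirely within elementary fraction manipulations (no smoothing/integration step) and, as a byproduct, non-submodularity for \emph{all} $\gamma>0$ rather than just $0<\gamma<1$. Both arguments are complete proofs of the stated proposition.
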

\begin{proof}
Consider $X$ fixed.  If the symmetric difference is increased by adding another element $y$ to $Y$ that is not in $X$:
\begin{align}
&\frac{|X\cap (Y \cup \{y\})|}{|X\cap (Y \cup \{y\})| + \gamma|X \triangle (Y \cup \{y\})|} 
= \frac{|X\cap Y |}{|X\cap Y| + \gamma|X \triangle Y|+\gamma} 
,
\end{align}
and
\begin{align}\label{eq:SorensenGammaSubmodularityMainStep}
&\frac{|X\cap Y |}{|X\cap Y| + \gamma|A|+\gamma} - \frac{|X\cap Y|}{|X\cap Y| + \gamma|A|}
-
\frac{|X\cap \tilde{Y} |}{|X\cap \tilde{Y}| + \gamma|B|+\gamma} + \frac{|X\cap \tilde{Y}|}{|X\cap \tilde{Y}| + \gamma|B|} = \\
&
\frac{\beta+\zeta}{\beta+\zeta + \gamma(\alpha+\delta)+\gamma} - \frac{\beta+\zeta}{\beta+\zeta + \gamma(\alpha+\delta)}
-
\frac{\zeta}{\zeta + \gamma(\alpha + \beta + \delta + \varepsilon)+\gamma} + \frac{\zeta}{\zeta + \gamma(\alpha + \beta + \delta + \varepsilon)} \nonumber
.
\end{align}
We note that $\beta + \zeta \geq \zeta$ and $\beta + \zeta + \gamma(\alpha + \delta) \leq \zeta + \gamma(\alpha + \beta + \delta + \varepsilon)$.  This means the magnitude of the difference of the first two terms in Equation~\eqref{eq:SorensenGammaSubmodularityMainStep} is greater than the magnitude of the difference of the second two terms.  As the difference of the first two terms is negative, the sum of all four terms must always be negative.  
Equation~\eqref{eq:SorensenGammaSubmodularityMainStep} is the difference between the r.h.s.\ and the l.h.s.\ of Inequality~\eqref{eq:SubmodularDefinitionInequality}.  We therefore determine that the S{\o}rensen$_\gamma$ similarity is non-submodular with respect to the symmetric difference for all $\gamma>0$.

We now consider removing an element $x$ from $Y$ that is in $X$:
\begin{align}
&\frac{|X\cap (Y \setminus \{x\})|}{|X\cap (Y \setminus \{x\})| + \gamma|X \triangle (Y \setminus \{x\})|} 
=\frac{|X\cap Y|-1}{|X\cap Y| + \gamma|X \triangle Y|+\gamma-1} 
.
\end{align}
\begin{align}\label{eq:SorensenGammaSubmodularityFalseNegative}
&\frac{|X\cap Y|-1}{|X\cap Y| + \gamma|A|+\gamma-1} - \frac{|X\cap Y|}{|X\cap Y| + \gamma|A|} - 
\left(
\frac{|X\cap \tilde{Y}|-1}{|X\cap \tilde{Y}| + \gamma|B|+\gamma-1} - \frac{|X\cap \tilde{Y}|}{|X\cap \tilde{Y}| + \gamma|B|}
\right) = \\
&  \frac{\beta+\zeta-1}{\beta+\zeta + \gamma(\alpha+\delta)+\gamma-1} - \frac{\beta+\zeta}{\beta+\zeta + \gamma(\alpha+\delta)} - 
\frac{\zeta-1}{\zeta + \gamma(\alpha+\beta+\delta+\varepsilon)+\gamma-1} + \frac{\zeta}{\zeta + \gamma(\alpha+\beta+\delta+\varepsilon)}
\nonumber
.
\end{align}
For $\gamma \geq 1$, by an analogous argument to that following Equation~\eqref{eq:SorensenGammaSubmodularityMainStep}, we have that the magnitude of the difference of the first two terms is greater than the magnitude of the difference of the second two terms, meaning the sum is negative and the resulting similarity is supermodular with respect to the symmetric difference of its arguments.

For $0<\gamma<1$, assume $\beta=\zeta=1$, and $\delta=\alpha=\varepsilon=0$, then the r.h.s.\ of Equation~\eqref{eq:SorensenGammaSubmodularityFalseNegative} equals
\begin{align}
\frac{2}{1 +\gamma} - 1 >0.
\end{align}
We note that $\beta = \zeta = 1$ satisfies the assumptions that $X\cap Y \neq \emptyset$ and $X\cap \tilde{Y} \neq \emptyset$.
We therefore conclude that the S{\o}rensen$_\gamma$ similarity is neither submodular nor supermodular for $0<\gamma<1$.
\end{proof}

\begin{remark}
\thref{thm:SorensenGamma} subsumes several previous results, including \cite[Proposition~11]{Yu2015b} and \cite[Proposition~6]{Yu2016a}, and implies that the Anderberg similarity is supermodular with respect to the symmetric difference of its arguments (Table~\ref{tab:setSimilarityMeasures}).
\end{remark}

\begin{proposition}[Sokal-Sneath$_\gamma$ similarity]\thlabel{prop:SokalSneathGammaSimilarity}
The Sokal-Sneath$_\gamma$ similarity is supermodular with respect to the symmetric difference of its arguments when $\gamma \geq 1$ and submodular when $0 < \gamma < 1$.
\end{proposition}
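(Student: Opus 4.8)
The plan is to collapse the Sokal--Sneath$_\gamma$ similarity to a function of a single integer variable and then reduce the claim to a one-dimensional convexity test. Since $V$ is partitioned into $X\cap Y$, $X\triangle Y$, and $\overline{X\cup Y}$, we have $|X\cap Y| + |\overline{X\cup Y}| = |V| - |X\triangle Y|$. Writing $m := |V|$ and, for a fixed $X$, setting $A := X\triangle Y$ and $d := |A| = |X\triangle Y|$, the set function from \thref{def:SupermodularSimilarity} becomes
\begin{equation}
f_X(A) = S\bigl(X, X\triangle A\bigr) = \frac{m-d}{(m-d) + \gamma d} = \frac{m-d}{m + (\gamma-1)d} =: \phi(d),
\end{equation}
so $f_X$ is cardinality-based, $f_X(A) = \phi(|A|)$, and $\phi(0) = 1 = S(X,X)$ is consistent with \thref{def:similarity}.

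I would then invoke \thref{thm:Submodular2ndOrder}: for a cardinality-based set function $A\mapsto\phi(|A|)$ the second-order difference at a set of size $k$ equals $\bigl(\phi(k+1)-\phi(k)\bigr) - \bigl(\phi(k+2)-\phi(k+1)\bigr)$, so $f_X$ is submodular iff $\phi$ is discretely concave on $\{0,\dots,m\}$ and supermodular iff $\phi$ is discretely convex there; and a function that is convex (resp.\ concave) on the real interval $[0,m]$ restricts to a discretely convex (resp.\ concave) function on the integers. Hence everything reduces to the convexity of $\phi$ on $[0,m]$.

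First I would check that the denominator $m+(\gamma-1)d$ stays positive on $[0,m]$ for every $\gamma>0$: it is at least $m$ when $\gamma\ge1$, and equals $m-(1-\gamma)d \ge \gamma m>0$ when $0<\gamma<1$, so $\phi$ is smooth there. A direct computation gives
\begin{equation}
\phi'(d) = \frac{-m\gamma}{\bigl(m+(\gamma-1)d\bigr)^2}, \qquad \phi''(d) = \frac{2m\gamma(\gamma-1)}{\bigl(m+(\gamma-1)d\bigr)^3},
\end{equation}
so $\phi''$ has the sign of $\gamma-1$ throughout $[0,m]$. Thus $\phi$ is convex when $\gamma\ge1$, giving $f_X$ supermodular, and concave when $0<\gamma<1$, giving $f_X$ submodular. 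For the supermodular-\emph{similarity} claim when $\gamma\ge1$ I additionally note $\phi'<0$, so $f_X$ is monotonically decreasing, verifying condition~\ref{item:SupermodularSimilarityMonotonicCondition} of \thref{def:SupermodularSimilarity}; the borderline case $\gamma=1$ has $\phi''\equiv0$, recovering the modularity of the Hamming similarity.

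Every step past the opening partition identity is elementary, so there is no serious obstacle here; the only points that will need a little care are the passage from continuous convexity to the discrete set-function notion via \thref{thm:Submodular2ndOrder}, and confirming that the denominator never vanishes on $[0,m]$. If one instead prefers to mirror the proof of \thref{thm:SorensenGamma}, one can expand the first-order differences of $f_X$ directly in the Greek cardinalities of Equations~\eqref{eq:GreekSizeDefinitionsFirst}--\eqref{eq:GreekSizeDefinitionsLast} and compare magnitudes, but the cardinality-based reduction above is shorter and makes the dichotomy at $\gamma=1$ transparent.
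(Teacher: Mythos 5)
Your proof is correct, but it takes a different route from the paper's. The paper argues directly on first-order differences: it computes the effect of adding $y\notin X$ to $Y$ and of removing $x\in X$ from $Y$, observes both give the same value (precisely because the similarity depends only on $|X\triangle Y|$, cf.\ Equation~\eqref{eq:SokalSneathGammaGreekVariables}), handles $\gamma\geq 1$ by the same magnitude-comparison argument used after Equation~\eqref{eq:SorensenGammaSubmodularityMainStep} for S{\o}rensen$_\gamma$, and handles $0<\gamma<1$ by explicitly evaluating the second-order difference in the Greek cardinalities (with $\tilde\alpha=\alpha+\delta$, $\tilde\beta=\beta+\varepsilon$) and checking the resulting closed form in Equation~\eqref{eq:SokalSneathGammaSubmodularGammaLT1} is positive. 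You instead make the cardinality structure the centerpiece: $f_X(A)=\phi(|A|)$ with $\phi(d)=\frac{m-d}{m+(\gamma-1)d}$, reduce sub/supermodularity to discrete concavity/convexity via \thref{thm:Submodular2ndOrder}, and settle everything with the sign of $\phi''$, whose numerator is $2m\gamma(\gamma-1)$ (your derivative computations and the denominator-positivity check on $[0,m]$ are right, including $\geq \gamma m>0$ for $0<\gamma<1$). This is essentially the machinery the paper only introduces later in Section~\ref{sec:SymmetricSupermodularHammingSimilarities} (\thref{def:CSF}, \thref{thm:supermodularConvex}); your argument is shorter, makes the dichotomy at $\gamma=1$ transparent, gives monotonicity for free from $\phi'<0$, and as a bonus exhibits Sokal-Sneath$_\gamma$ for $\gamma\geq 1$ as a member of $\mathit{CSHS}$, whereas the paper's Greek-variable computation is more elementary and self-contained at that point in the text and parallels the S{\o}rensen$_\gamma$ proof. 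The only step you should spell out if writing this up is the (routine) passage from convexity of $\phi$ on the interval $[0,m]$ to discrete convexity of its restriction to $\{0,\dots,m\}$, which you already flag.
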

\begin{proof}
The Sokal-Sneath$_\gamma$ similarity between $X$ and $Y$ is:
\begin{align}\label{eq:SokalSneathGammaGreekVariables}
\frac{|X\cap Y| + |\overline{X \cup Y}|}{|X\cap Y| + |\overline{X \cup Y}| + \gamma|X \triangle Y|} 
%= \frac{|V|-(\alpha+\delta)}{ |V| + (\gamma-1)(\alpha+\delta)}
= \frac{|V| - |X \triangle Y|}{|V|  + (\gamma-1)|X\triangle Y|}.
\end{align}

If the symmetric difference is increased by adding another element $y$ to $Y$ that is not in $X$:
\begin{align}\label{eq:SokalSneathGammaFalsePositive}
\frac{|X\cap (Y \cup \{y\})| + |\overline{X \cup (Y \cup \{y\})}|}{|X\cap (Y \cup \{y\})| + |\overline{X \cup (Y \cup \{y\})}| + \gamma|X \triangle (Y \cup \{y\})|} =\\
\frac{|X\cap Y| + |\overline{X \cup Y}|-1}{|X\cap Y| + |\overline{X \cup Y}| -1 + \gamma|X \triangle Y| + \gamma} \nonumber
.
\end{align}

We now consider removing an element $x$ from $Y$ that is in $X$:
\begin{align}
\label{eq:SokalSneathGammaFalseNegative}
\frac{|X\cap (Y \setminus \{x\})| + |\overline{X \cup (Y \setminus \{x\})}|}{|X\cap (Y \setminus \{x\})| + |\overline{X \cup (Y \setminus \{x\})}| + \gamma|X \triangle (Y \setminus \{x\})|} = \\
\frac{|X\cap Y| -1 + |\overline{X \cup Y }|}{|X\cap Y| -1 + |\overline{X \cup Y}| + \gamma|X \triangle Y|+\gamma}
\nonumber
.
\end{align}
We can see that the r.h.s.\ of Equation~\eqref{eq:SokalSneathGammaFalsePositive} is equal to the r.h.s.\ of Equation~\eqref{eq:SokalSneathGammaFalseNegative}.

For $\gamma\geq 1$, the numerator of the Sokal-Sneath$_\gamma$ similarity is monotonically decreasing with the symmetric difference, while the denominator is monotonically increasing with the symmetric difference (see Equation~\eqref{eq:SokalSneathGammaGreekVariables}).  We can therefore apply the same logic as after Equation~\eqref{eq:SorensenGammaSubmodularityMainStep} to conclude that the similarity is supermodular when $\gamma \geq 1$.

For $0<\gamma<1$, 
we will denote $\tilde{\alpha}=\alpha+\delta$ and $\tilde{\beta} = \beta + \varepsilon$:
\begin{align}
  &  \frac{|V| - |X \triangle Y| -1}{|V|  - (1-\gamma)|X\triangle Y| - (1-\gamma)}
  -  \frac{|V| - |X \triangle Y|}{|V|  - (1-\gamma)|X\triangle Y|}\nonumber \\
&  - \left(
\frac{|V| - |X \triangle \tilde{Y}| -1}{|V|  - (1-\gamma)|X\triangle \tilde{Y}| - (1-\gamma)}
  - \frac{|V| - |X \triangle \tilde{Y}|}{|V|  - (1-\gamma)|X\triangle \tilde{Y}|}
  \right) \nonumber \\
=&  \frac{|V| - \tilde{\alpha} -1}{|V|  - (1-\gamma)
    \tilde{\alpha} - (1-\gamma)}
  -  \frac{|V| - \tilde{\alpha}}{|V|  - (1-\gamma)\tilde{\alpha}}
\\ \nonumber &  - \left(
\frac{|V| - (\tilde{\alpha} + \tilde{\beta}) -1}{|V|  - (1-\gamma)(\tilde{\alpha} + \tilde{\beta}) - (1-\gamma)}
- \frac{|V| - (\tilde{\alpha} + \tilde{\beta})}{|V|  -  (1-\gamma)(\tilde{\alpha} + \tilde{\beta})}
\right) \\
=& \frac{
\tilde{\beta}(1-\gamma)\gamma |V|(2 |V| -(1-\gamma)(2\tilde{\alpha}+\tilde{\beta}+1))
}{  (|V|-\tilde{\alpha}(1-\gamma))(|V|-(\tilde{\alpha}+1)(1-\gamma))(|V|-(\tilde{\alpha}+\tilde{\beta})(1-\gamma))(|V|-(\tilde{\alpha}+\tilde{\beta}+1)(1-\gamma))} \label{eq:SokalSneathGammaSubmodularGammaLT1}
\end{align}
For all  $\tilde{\alpha}\geq 0$, $\tilde{\beta}\geq 0$ satisfying $\tilde{\alpha}+\tilde{\beta}< |V|$, and $0<\gamma<1$, each term in the numerator and each term in the denominator of Equation~\eqref{eq:SokalSneathGammaSubmodularGammaLT1} is positive, indicating that the Sokal-Sneath$_\gamma$ similarity is submodular.
\end{proof}

\begin{corollary}[Rogers-Tanimoto similarity]\thlabel{prop:RogersTanimotoSimilarity}
The Rogers-Tanimoto similarity is supermodular with respect to the symmetric difference in its arguments, holding one argument fixed.
\end{corollary}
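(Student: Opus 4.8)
The plan is to observe that the Rogers--Tanimoto similarity is nothing but the Sokal--Sneath$_\gamma$ similarity at the particular value $\gamma = 2$. Substituting $\gamma = 2$ into $\frac{|X\cap Y| + |\overline{X \cup Y}|}{|X\cap Y| + |\overline{X \cup Y}| + \gamma|X \triangle Y|}$ reproduces exactly the Rogers--Tanimoto entry of Table~\ref{tab:setSimilarityMeasures} for $X\neq Y$, and both expressions equal $1$ when $X = Y$; hence the two functions coincide as similarities in the sense of \thref{def:similarity}. So the whole statement reduces to a direct appeal to \thref{prop:SokalSneathGammaSimilarity}.

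Next I would invoke that proposition with $\gamma = 2 \geq 1$: it gives that the set function $f_X : A \mapsto S(X, X\triangle A)$ associated with the Sokal--Sneath$_2$ (= Rogers--Tanimoto) similarity is supermodular, which is precisely condition~\ref{symmetricDiffConditionSupermodular} of \thref{def:SupermodularSimilarity}. The only point that still needs a word is condition~\ref{item:SupermodularSimilarityMonotonicCondition}, the monotone-decrease requirement, since the proof of \thref{prop:SokalSneathGammaSimilarity} as written establishes only the (super)modularity inequality. For this I would use the rewriting $\frac{|V| - |X \triangle Y|}{|V| + (\gamma-1)|X\triangle Y|}$ from Equation~\eqref{eq:SokalSneathGammaGreekVariables} specialized to $\gamma = 2$: its numerator is decreasing and its denominator increasing in $|X\triangle Y|$, so the ratio is a decreasing function of $|X\triangle Y|$. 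Since $A\subseteq B$ forces $|X\triangle(X\triangle A)| = |A| \leq |B| = |X\triangle(X\triangle B)|$, we obtain $f_X(A)\geq f_X(B)$, as required.

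There is essentially no hard step here: the result is a corollary precisely because the Rogers--Tanimoto similarity sits inside the already-analyzed one-parameter family. The only thing to be slightly careful about is not to conflate the bare (super)modularity inequality with the full \thref{def:SupermodularSimilarity}, i.e.\ to carry out the short monotonicity check above in addition to citing \thref{prop:SokalSneathGammaSimilarity}.
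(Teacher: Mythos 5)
Your proposal is correct and matches the paper's (implicit) argument: the result is stated as a corollary of \thref{prop:SokalSneathGammaSimilarity} precisely because Rogers--Tanimoto is the Sokal--Sneath$_\gamma$ similarity with $\gamma = 2 \geq 1$. Your additional monotonicity check via Equation~\eqref{eq:SokalSneathGammaGreekVariables} is a harmless (and sensible) extra, though the corollary as stated only claims the supermodularity of $f_X$.
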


\begin{corollary}[Sokal-Sneath 1 similarity]\thlabel{prop:SokalSneath1Similarity}
The Sokal-Sneath 1 similarity is submodular with respect to the symmetric difference in its arguments, holding one argument fixed.
\end{corollary}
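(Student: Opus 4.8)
The plan is to observe that the Sokal--Sneath 1 similarity is nothing but the Sokal--Sneath$_\gamma$ similarity of \thref{prop:SokalSneathGammaSimilarity} specialized to $\gamma = \tfrac12$. Comparing the entry
\begin{align}
S(X,Y) = \frac{|X\cap Y| + |\overline{X \cup Y}|}{|X\cap Y| + |\overline{X \cup Y}| + \tfrac{1}{2}|X \triangle Y|}
\end{align}
from Table~\ref{tab:setSimilarityMeasures} with the general form $\tfrac{|X\cap Y| + |\overline{X \cup Y}|}{|X\cap Y| + |\overline{X \cup Y}| + \gamma|X \triangle Y|}$, the two expressions agree term-for-term once we set $\gamma = \tfrac12$. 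The only thing to verify is that this value falls in the open interval governing the submodular regime, i.e.\ that $0 < \tfrac12 < 1$, which is immediate. Hence the second case of \thref{prop:SokalSneathGammaSimilarity} applies verbatim and yields that $f_X : A \mapsto S(X, X \triangle A)$ is submodular, holding the other argument fixed.

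There is essentially no obstacle here; the corollary is a direct instantiation of the preceding proposition. If one preferred a self-contained argument, one could instead substitute $\gamma = \tfrac12$ into Equation~\eqref{eq:SokalSneathGammaSubmodularGammaLT1} and confirm that the numerator $\tilde{\beta}(1-\gamma)\gamma |V|(2|V| - (1-\gamma)(2\tilde{\alpha}+\tilde{\beta}+1))$ together with the four denominator factors $(|V|-\tilde{\alpha}(1-\gamma))$, $(|V|-(\tilde{\alpha}+1)(1-\gamma))$, $(|V|-(\tilde{\alpha}+\tilde{\beta})(1-\gamma))$, $(|V|-(\tilde{\alpha}+\tilde{\beta}+1)(1-\gamma))$ are all strictly positive for every admissible pair $\tilde{\alpha}\geq 0$, $\tilde{\beta}\geq 0$ with $\tilde{\alpha}+\tilde{\beta} < |V|$; but this merely reproduces the computation already carried out in the proof of \thref{prop:SokalSneathGammaSimilarity}, so invoking that proposition is the cleanest route. (The companion claim in \thref{prop:RogersTanimotoSimilarity} is obtained the same way, taking $\gamma = 2 \geq 1$ and using the supermodular case instead.)
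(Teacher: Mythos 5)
Your proposal is correct and matches the paper's intent exactly: the corollary is stated without a separate proof precisely because it is the $\gamma = \tfrac12$ instance of the submodular case of \thref{prop:SokalSneathGammaSimilarity}, which is the identification you make. Nothing further is needed.
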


\begin{proposition}[Simpson similarity]\thlabel{thm:SimpsonSetSimilarity}
The Simpson set similarity is neither submodular nor supermodular with respect to the symmetric difference between its arguments.
\end{proposition}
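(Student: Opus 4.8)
\emph{Proof proposal.} The obstruction is the $\min(|X|,|Y|)$ in the denominator: it forces $S(X,Y)=1$ as soon as one of $X,Y$ contains the other, and it switches which of $|X|,|Y|$ is active as $|Y|$ crosses $|X|$, so the set function $f_X:A\mapsto S(X,X\triangle A)$ is built from flat pieces glued along a kink. Near such a kink the discrete second difference of $f_X$ can be made to have either sign, so the plan is to exhibit two explicit configurations for the second-order difference criterion of \thref{thm:Submodular2ndOrder}: one in which $f_X(A\cup\{t\})-f_X(A) < f_X(A\cup\{s,t\})-f_X(A\cup\{s\})$, ruling out submodularity, and one in which the reverse strict inequality holds, ruling out supermodularity.

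To locate usable configurations I would first record a normal form. Fixing $|X|=m$ and writing $a_1=|A\cap X|$, $a_2=|A\setminus X|$, one has $|X\cap(X\triangle A)|=m-a_1$ and $|X\triangle A|=(m-a_1)+a_2$, hence $f_X(A)=1-a_1/m$ when $a_1\le a_2$ and $f_X(A)=(m-a_1)/(m-a_1+a_2)$ when $a_1>a_2$. In particular $f_X$ depends only on $(a_1,a_2)$ and equals $1$ whenever $A\subseteq X$ or $A\subseteq V\setminus X$, which is exactly where the flat pieces meet.

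Now take any ground set with $|V|\ge 4$, put $X=\{1,2,3\}$ (so $m=3$), and apply \thref{thm:Submodular2ndOrder} to $f_X$. For \emph{non-supermodularity}, use base $A=\{1\}$ with $t=2\in X$ and $s=4\notin X$: one checks from the normal form that $f_X(\{1\})=f_X(\{1,2\})=1$, $f_X(\{1,4\})=\tfrac{2}{3}$ and $f_X(\{1,2,4\})=\tfrac{1}{2}$, so the marginal of $t$ is $0$ at $A$ but $-\tfrac{1}{6}$ at $A\cup\{s\}$, and $0\not\le-\tfrac{1}{6}$. For \emph{non-submodularity}, use base $A=\{4\}$ with $t=2\in X$ and $s=1\in X$: here $f_X(\{4\})=1$, $f_X(\{2,4\})=f_X(\{1,4\})=\tfrac{2}{3}$ and $f_X(\{1,2,4\})=\tfrac{1}{2}$, so the marginal of $t$ is $-\tfrac{1}{3}$ at $A$ and $-\tfrac{1}{6}$ at $A\cup\{s\}$, and $-\tfrac{1}{3}\not\ge-\tfrac{1}{6}$. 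Together these two witnesses show $f_X$ is neither submodular nor supermodular, hence so is the Simpson similarity with respect to the symmetric difference of its arguments.

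The only step needing any care — and the reason I keep $m=3$ and never let the moving argument empty out — is the degenerate case $X\triangle A=\emptyset$, where $\min(|X|,0)=0$ makes the Simpson ratio undefined; in every configuration above each set $X\triangle A$ that occurs is nonempty, so no empty-set convention is invoked, and enlarging $|V|$ beyond $4$ changes nothing since the similarity never sees $\overline{X\cup Y}$. Everything else is a one-line evaluation of $(m-a_1)/\min(m,\,m-a_1+a_2)$, so I do not expect a genuine obstacle; the substantive part is the structural observation that the two failures live on opposite sides of the kink induced by $\min(|X|,|Y|)$.
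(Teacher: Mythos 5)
Your proposal is correct — I checked all the evaluated values ($f_X(\{1\})=f_X(\{1,2\})=f_X(\{4\})=1$, $f_X(\{1,4\})=f_X(\{2,4\})=\tfrac{2}{3}$, $f_X(\{1,2,4\})=\tfrac{1}{2}$), and the two witnesses indeed violate the second-order inequalities in the required opposite directions, with the degenerate case $X\triangle A=X$ correctly avoided. This is essentially the same approach as the paper's proof, which likewise exhibits explicit counterexample configurations for each direction, only parametrized by cardinalities of the regions of the Venn diagram rather than by your concrete sets and $(a_1,a_2)$ normal form.
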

\begin{proof}
Consider increasing the symmetric difference by adding an element $y$ to $Y$ that is not in $X$:
\begin{align}\label{eq:SimpsonFalsePositiveNoGreek}
\frac{|X\cap (Y\cup \{y\})|}{\min(|X|,|Y\cup \{y\}|)} = \frac{|X\cap Y|}{\min(|X|,|Y|+1)} = \frac{|X\cap Y|}{|X\cap Y|+\min(|X\setminus Y|,|Y\setminus X| +1)}
\end{align}
and
\begin{align}
\frac{|X\cap Y|}{|X\cap Y|+\min(|X\setminus Y|,|Y\setminus X| +1)} - \frac{|X\cap Y|}{|X\cap Y|+\min(|X\setminus Y|,|Y\setminus X|)} - \nonumber\\
\left(
\frac{|X\cap \tilde{Y}|}{|X\cap \tilde{Y}|+\min(|X\setminus \tilde{Y}|,|\tilde{Y}\setminus X| +1)} - \frac{|X\cap \tilde{Y}|}{|X\cap \tilde{Y}|+\min(|X\setminus \tilde{Y}|,|\tilde{Y}\setminus X|)} 
\right) = \nonumber \\
\frac{\beta+\zeta}{\beta+\zeta+\min(\alpha,\delta +1)} - \frac{\beta+\zeta}{\beta+\zeta+\min(\alpha,\delta)} - \nonumber\\
\left(
\frac{\zeta}{\zeta+\min(\alpha+\beta,\delta+\varepsilon +1)} - \frac{\zeta}{\zeta+\min(\alpha+\beta,\delta+\varepsilon)} 
\right) \label{eq:SimpsonFalsePositiveGreek}
\end{align}
Consider $\beta=\alpha=\varepsilon=1$ and $\zeta=\delta=0$, the r.h.s.\ of Equation~\eqref{eq:SimpsonFalsePositiveGreek} becomes
$-\frac{1}{2}$, indicating the Simpson set similarity is non-submodular.

Now consider $\alpha < \delta$ and $\beta > \delta-\alpha + \varepsilon + 1$.  The r.h.s.\ of Equation~\eqref{eq:SimpsonFalsePositiveGreek} simplifies to
\begin{align}
\frac{\zeta}{\zeta+\delta+\varepsilon}
 - 
\frac{\zeta}{\zeta+\delta+\varepsilon +1}
\end{align}
which is positive for all $\zeta > 0$, indicating the Simpson set similarity is neither submodular nor supermodular.
\end{proof}

\begin{proposition}[Braun–Blanquet similarity]\thlabel{thm:BraunBlanquetSimilarity}
The Braun–Blanquet similarity is neither submodular nor supermodular with respect to the symmetric difference between its arguments.
\end{proposition}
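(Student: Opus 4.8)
The plan is to follow the template of the proof of \thref{thm:SimpsonSetSimilarity}, replacing $\min$ by $\max$ and exhibiting two cardinality configurations: one that violates the submodular inequality \eqref{eq:SubmodularDefinitionInequality} for $f_X : A \mapsto S(X, X\triangle A)$, and one that violates the reversed (supermodular) inequality.

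First I would put the similarity into the form used for the Greek bookkeeping. Since $|X| = |X\cap Y| + |X\setminus Y|$ and $|Y| = |X\cap Y| + |Y\setminus X|$, we have $\max(|X|,|Y|) = |X\cap Y| + \max(|X\setminus Y|,|Y\setminus X|)$, hence $S(X,Y) = \frac{|X\cap Y|}{|X\cap Y| + \max(|X\setminus Y|,|Y\setminus X|)}$. Adding an element $y \notin X\cup Y$ to $Y$ fixes $|X\cap Y|$ and increments $|Y\setminus X|$, so $S(X, Y\cup\{y\}) = \frac{|X\cap Y|}{|X\cap Y| + \max(|X\setminus Y|,|Y\setminus X|+1)}$. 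Writing out the second-order difference $[f_X(A\cup\{y\}) - f_X(A)] - [f_X(B\cup\{y\}) - f_X(B)]$ for $A = X\triangle Y \subseteq B = X\triangle\tilde Y$ and substituting the cardinalities of Equations~\eqref{eq:GreekSizeDefinitionsFirst}--\eqref{eq:GreekSizeDefinitionsLast} yields
\begin{align*}
\Delta & = \frac{\beta+\zeta}{\beta+\zeta+\max(\alpha,\delta+1)} - \frac{\beta+\zeta}{\beta+\zeta+\max(\alpha,\delta)} \\
& \quad - \left( \frac{\zeta}{\zeta+\max(\alpha+\beta,\delta+\varepsilon+1)} - \frac{\zeta}{\zeta+\max(\alpha+\beta,\delta+\varepsilon)} \right),
\end{align*}
which is exactly the Simpson expression with $\min$ replaced by $\max$.

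To disprove submodularity I would choose the Greek values so that the second parenthesised pair cancels while the first pair is strictly negative: taking $\alpha \leq \delta$ gives $\max(\alpha,\delta+1) = \delta+1 > \delta = \max(\alpha,\delta)$, and taking $\alpha+\beta > \delta+\varepsilon$ makes the second pair vanish. For instance $\alpha = 0$, $\delta = 1$, $\beta = 2$, $\varepsilon = 0$, $\zeta = 1$ (with $|V|$ large enough to contain an element $y \notin X\cup\tilde Y$) gives $\Delta = \tfrac{3}{5} - \tfrac{3}{4} = -\tfrac{3}{20} < 0$, contradicting \eqref{eq:SubmodularDefinitionInequality}. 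To disprove supermodularity I would instead make the first pair cancel and the second pair strictly positive: $\alpha > \delta+1$ kills the first pair, and $\alpha+\beta < \delta+\varepsilon$ forces $\max(\alpha+\beta,\delta+\varepsilon+1) = \delta+\varepsilon+1 > \delta+\varepsilon$; e.g. $\alpha = 2$, $\delta = 0$, $\beta = 0$, $\varepsilon = 3$, $\zeta = 1$ gives $\Delta = -(\tfrac{1}{5} - \tfrac{1}{4}) = \tfrac{1}{20} > 0$. The two witnesses together establish that the Braun–Blanquet similarity is neither submodular nor supermodular with respect to the symmetric difference of its arguments, consistent with its non-LSHability in Table~\ref{tab:setSimilarityMeasures}.

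The computation is routine; the only care needed is to choose the cardinality configurations so that the two sign constraints hold simultaneously — for the submodular violation $\alpha+\beta > \delta+\varepsilon$ together with $\alpha \leq \delta$ forces $\beta > \varepsilon$, so one must take $\beta$ comparatively large and $\varepsilon$ small (and symmetrically for the supermodular violation). I would also verify that each configuration keeps all Greek variables non-negative integers, so that it genuinely corresponds to a pair $A \subseteq B$ in the sense of Figure~\ref{fig:SymmetricDifferenceSupersetFigureGreek}, and that $|V|$ can be taken large enough for the added element $y$ to lie outside $X \cup \tilde Y$; no deeper difficulty is anticipated.
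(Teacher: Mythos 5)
Your proposal is correct and follows essentially the same route as the paper's proof: the identical Greek-variable second-order difference (the paper's Equation~\eqref{eq:BraunBlanquetFalsePositiveGreek}), with one cardinality configuration giving a strictly negative value (the paper uses $\beta=1$, $\alpha=\delta=\zeta=\varepsilon=0$, yielding $-\tfrac12$) and one giving a strictly positive value via $\alpha>\delta$ and $\alpha+\beta<\delta+\varepsilon$, which is exactly the regime the paper uses; your specific witnesses check out numerically. The only difference is cosmetic (concrete numerical instances versus the paper's parametric family for the supermodularity violation), so no further comment is needed.
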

\begin{proof}
Following the derivation for the Simpson similarity (Equation~\eqref{eq:SimpsonFalsePositiveNoGreek}), increasing the symmetric difference by adding an element $y \notin X$ to $Y$ changes the value of the Braun-Blanquet similarity to:
\begin{align}
\frac{|X\cap Y|}{|X\cap Y|+\max(|X\setminus Y|,|Y\setminus X| +1)}
\end{align}
and
\begin{align}
\frac{|X\cap Y|}{|X\cap Y|+\max(|X\setminus Y|,|Y\setminus X| +1)} - \frac{|X\cap Y|}{|X\cap Y|+\max(|X\setminus Y|,|Y\setminus X|)} - \nonumber\\
\left(
\frac{|X\cap \tilde{Y}|}{|X\cap \tilde{Y}|+\max(|X\setminus \tilde{Y}|,|\tilde{Y}\setminus X| +1)} - \frac{|X\cap \tilde{Y}|}{|X\cap \tilde{Y}|+\max(|X\setminus \tilde{Y}|,|\tilde{Y}\setminus X|)} 
\right) = \nonumber \\
\frac{\beta+\zeta}{\beta+\zeta+\max(\alpha,\delta +1)} - \frac{\beta+\zeta}{\beta+\zeta+\max(\alpha,\delta)} - \nonumber\\
\left(
\frac{\zeta}{\zeta+\max(\alpha+\beta,\delta+\varepsilon +1)} - \frac{\zeta}{\zeta+\max(\alpha+\beta,\delta+\varepsilon)} 
\right) \label{eq:BraunBlanquetFalsePositiveGreek}
\end{align}
Consider $\beta= 1$ and $\alpha=\delta=\zeta=\varepsilon=0$, the r.h.s.\ of Equation~\eqref{eq:BraunBlanquetFalsePositiveGreek} becomes $-\frac{1}{2}$ indicating the Braun-Blanquet similarity is non-submodular.

Now consider $\alpha>\delta$ and $
\beta < \delta - \alpha + \varepsilon$ and $\varepsilon$ sufficiently large such that $\beta\geq 0$ and $\delta+\varepsilon>\alpha+\beta$.  The r.h.s.\ of Equation~\eqref{eq:BraunBlanquetFalsePositiveGreek} simplifies to
\begin{align}
 \frac{\zeta}{\zeta+\delta+\varepsilon}- 
\frac{\zeta}{\zeta+\delta+\varepsilon +1}
\end{align}
which is positive for all $\zeta>0$, indicating the Braun-Blanquet similarity is neither submodular nor supermodular.
\end{proof}

\begin{proposition}[Forbes similarity]\thlabel{thm:ForbesSimilarity}
The Forbes similarity is neither submodular nor supermodular with respect to the symmetric difference between its arguments.
\end{proposition}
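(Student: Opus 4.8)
The plan is to mirror the arguments given for the Simpson and Braun--Blanquet similarities (Equations~\eqref{eq:SimpsonFalsePositiveNoGreek}--\eqref{eq:SimpsonFalsePositiveGreek}): fix one local modification of the symmetric difference, express its second-order difference through the Greek cardinalities of Equations~\eqref{eq:GreekSizeDefinitionsFirst}--\eqref{eq:GreekSizeDefinitionsLast}, and then choose parameters once so that it is negative (ruling out submodularity of $f_X$) and once so that it is positive (ruling out supermodularity of $f_X$).

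First I would note that $X$ and $V$ are held fixed, so $S(X,Y)=\frac{|V|}{|X|}\cdot\frac{|X\cap Y|}{|Y|}$ is a strictly positive constant times $\frac{|X\cap Y|}{|Y|}$; since scaling a set function by a positive constant preserves sub- and supermodularity, it suffices to analyze $\frac{|X\cap Y|}{|Y|}$. I would then use the move of adjoining to both $Y$ and $\tilde Y$ an element $y$ outside $X\cup\tilde Y$ (hence outside $X\cup Y$ as well), so that $y$ lies in the $\eta$-region of Figure~\ref{fig:SymmetricDifferenceSupersetFigureGreek}; this leaves $|X\cap Y|$ unchanged, increases $|Y|$ by one, and replaces $X\triangle Y$ by $(X\triangle Y)\cup\{y\}$. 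Writing $A=X\triangle Y\subseteq B=X\triangle\tilde Y$ with $y\notin B$, the quantity
\begin{align}
\bigl(f_X(A\cup\{y\})-f_X(A)\bigr)-\bigl(f_X(B\cup\{y\})-f_X(B)\bigr)=\frac{|V|}{|X|}\left(\frac{\beta+\zeta}{\beta+\zeta+\delta+1}-\frac{\beta+\zeta}{\beta+\zeta+\delta}-\frac{\zeta}{\zeta+\delta+\varepsilon+1}+\frac{\zeta}{\zeta+\delta+\varepsilon}\right)
\end{align}
must be $\geq 0$ for all admissible Greek cardinalities if $f_X$ is submodular and must be $\leq 0$ if $f_X$ is supermodular (applying \thref{def:Submodular} to $f_X$ and to $-f_X$); here each bracketed pair has the shape $\frac{-n}{(n+m)(n+m+1)}$.

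Then I would exhibit two explicit choices. Taking $\alpha=\beta=\varepsilon=\eta=1$ and $\zeta=\delta=0$ (so $|V|=4$, $|X|=2$) makes the displayed quantity equal to $-1<0$, so $f_X$ is not submodular; taking $\alpha=0$, $\beta=\eta=1$, $\zeta=2$, $\delta=1$, $\varepsilon=0$ (so $|V|=5$, $|X|=3$) makes it equal to $\tfrac{1}{36}>0$, so $f_X$ is not supermodular. Together these establish that the Forbes similarity is neither submodular nor supermodular. The only delicate point---the ``main obstacle,'' though a mild one---is bookkeeping: the Forbes similarity is not normalized to $[0,1]$ (for instance $S(X,X)=|V|/|X|$), so in each configuration one must check, using the Greek cardinalities, that $Y$, $\tilde Y$, $Y\cup\{y\}$ and $\tilde Y\cup\{y\}$ are all nonempty and distinct from $X$, so that the Table~\ref{tab:setSimilarityMeasures} expression for $S$ genuinely applies and no $0/0$ arises; both configurations above are easily seen to satisfy this.
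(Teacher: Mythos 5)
Your proposal is correct and follows essentially the same route as the paper: reduce modulo the positive constant $|V|/|X|$, take the second-order difference for adjoining an element $y\notin X\cup\tilde Y$, express it in the Greek cardinalities, and exhibit one parameter choice making it negative and one making it positive (your witnesses $-1$ and $\tfrac{1}{36}$ check out). The only difference is the particular numerical configurations chosen, plus your extra care that $Y,\tilde Y\neq X$ so the tabulated formula applies, which is a harmless refinement of the paper's argument.
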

\begin{proof}
Submodularity is closed over multiplication by positive scalars, so it is sufficient to analyze
$\frac{|X\cap Y|}{|X|\cdot |Y|}$.

When adding an element to $y \notin X$ to $Y$:
\begin{align}
\frac{|X\cap (Y\cup \{y\})|}{|X|\cdot |Y\cup \{y\}|} = \frac{|X\cap Y|}{|X|\cdot (|Y|+1)}
\end{align}
and
\begin{align}
\frac{|X\cap Y|}{|X|\cdot (|Y|+1)} -
\frac{|X\cap Y|}{|X|\cdot |Y|} -
\left(
\frac{|X\cap \tilde{Y}|}{|X|\cdot (|\tilde{Y}|+1)} -
\frac{|X\cap \tilde{Y}|}{|X|\cdot |\tilde{Y}|}
\right) = \nonumber \\
\frac{\beta+\zeta}{(\alpha+\beta+\zeta) (\beta+\zeta+\delta+1)} -
\frac{\beta+\zeta}{(\alpha+\beta+\zeta) (\beta+\zeta+\delta)} - \nonumber \\
\left(
\frac{\zeta}{(\alpha+\beta+\zeta) (\zeta+\delta+\varepsilon+1)} -
\frac{\zeta}{(\alpha+\beta+\zeta)(\zeta+\delta+\varepsilon)}
\right)
\end{align}
Consider $\beta = 0$.  We then have
\begin{align}
\frac{\zeta}{(\alpha+\zeta) (\zeta+\delta+1)} -
\frac{\zeta}{(\alpha+\zeta) (\zeta+\delta)} - \nonumber %\\
\left(
\frac{\zeta}{(\alpha+\zeta) (\zeta+\delta+\varepsilon+1)} -
\frac{\zeta}{(\alpha+\zeta)(\zeta+\delta+\varepsilon)}
\right)
\end{align}
The numerators of the first two terms are the same as the numerators of the second two terms, except for the addition of $\varepsilon$ which can be arbitrarily large.  The sum of the four terms is therefore negative, and we conclude that the Forbes similarity is non-submodular.

Now consider $\alpha=\delta=\varepsilon=0$, $\beta=\zeta=1$, the sum of the four terms is
$\frac{1}{12}$ indicating that the Forbes similarity is neither submodular nor supermodular.
\end{proof}

In the next definition, we define a set similarity equivalent to \cite[Definition~3.14]{Chierichetti:2012:LFA:2095116.2095201}.  \cite[Definition~3.14]{Chierichetti:2012:LFA:2095116.2095201} encodes the universe as a tuple consisting of an element from the power set of a set of size $k$, and an integer $1\leq i\leq n$.  Here, we consider the power set of a set of size $k+n$ and take the cardinality of the last $n$ elements to encode $i$:
\begin{definition}[Cardinality encoding of intersection similarity \cite{Chierichetti:2012:LFA:2095116.2095201}]\thlabel{def:CardinalityIntersectionSimilarity}
For $k,n \in \mathbb{Z}_+$ and $x,h \in \mathbb{R}_+$ such that $0 \leq x \leq x+kh \leq 1$ the cardinality encoding of the intersection similarity $H_{x,h,k,n} : S_{k,n}^2 \rightarrow \mathbb{R}$ is given by
\begin{align}
S_{k,n} := \mathcal{P}(\{1,\dots,k+n\}) ,
\end{align}
and
\begin{align}
H\left(X,Y\right) := \begin{cases}
x + |X \cap Y \cap \{1,\dots,k\}| h & \text{if } (X\cap \{1,\dots,k\}) \neq (Y \cap \{1,\dots,k\}) \lor i \neq j \\
1 & \text{otherwise,}
\end{cases}
\end{align}
where $i=|X \cap \{k+1,\dots,k+n\}|$ and $j=|Y \cap \{k+1,\dots,k+n\}|$.
\end{definition}

\begin{proposition}[Submodularity of the cardinality encoding of intersection similarity]\thlabel{thm:IntersectionSimilarity}
The intersection similarity is neither submodular nor supermodular when taken with respect to the symmetric difference of its arguments.
\end{proposition}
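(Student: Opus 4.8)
The plan is to work directly with the set function $f_X : A \mapsto H(X, X\triangle A)$ attached to the fixed argument $X$ (as in \thref{def:SupermodularSimilarity}), and to exhibit a single instance of \thref{def:CardinalityIntersectionSimilarity} on which $f_X$ violates the second-order inequality of \thref{thm:Submodular2ndOrder} at one point and violates its reverse at another. The first step is to unwind the case split defining $H$ into an explicit formula for $f_X$. Writing $A = A_1 \cup A_2$ with $A_1 = A\cap\{1,\dots,k\}$, $A_2 = A\cap\{k+1,\dots,k+n\}$, and similarly $X = X_1\cup X_2$, the set $Y := X\triangle A$ satisfies $Y\cap\{1,\dots,k\} = X_1\triangle A_1$, hence $X\cap Y\cap\{1,\dots,k\} = X_1\setminus A_1$, while $j = |Y\cap\{k+1,\dots,k+n\}| = |X_2\triangle A_2|$. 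Since $i = |X_2|$ is determined by $X$ alone, $j = i$ holds exactly when $A_2$ is balanced with respect to $X_2$, i.e.\ $|A_2\cap X_2| = |A_2\setminus X_2|$, and $Y\cap\{1,\dots,k\} = X\cap\{1,\dots,k\}$ holds exactly when $A_1 = \emptyset$. Thus
\begin{equation}
f_X(A) = \begin{cases} 1 & \text{if } A_1 = \emptyset \text{ and } |A_2\cap X_2| = |A_2\setminus X_2|,\\ x + |X_1\setminus A_1|\,h & \text{otherwise.}\end{cases}
\end{equation}

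The second step is to choose the smallest instance on which both obstructions are visible. Take $k = 1$, $n = 2$, base set $V = \{1,2,3\}$, fixed set $X = \{1,2\}$ (so $X_1 = \{1\}$, $X_2 = \{2\}$, $i = 1$), and admissible parameters with $x + h < 1$ (e.g.\ $x = 0$, $h = \frac{1}{2}$). Reading off the formula, $f_X$ equals $1$ on $\emptyset$ and on $\{2,3\}$, equals $x + h$ on $\{2\}$ and on $\{3\}$, and equals $x$ on every subset containing $1$. Applying \thref{thm:Submodular2ndOrder} with $A = \emptyset$, $s = 2$, $t = 3$ yields
\begin{equation}
f_X(\{3\}) - f_X(\emptyset) = (x+h) - 1 \;<\; 1 - (x+h) = f_X(\{2,3\}) - f_X(\{2\}),
\end{equation}
so the submodularity inequality is violated and $f_X$ is not submodular. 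Applying it instead with $A = \{2\}$, $s = 1$, $t = 3$ yields
\begin{equation}
f_X(\{2,3\}) - f_X(\{2\}) = 1 - (x+h) \;>\; 0 = f_X(\{1,2,3\}) - f_X(\{1,2\}),
\end{equation}
which is the reverse inequality, so $f_X$ is not supermodular; since $f_X$ is moreover non-monotone, it is in any case not a supermodular similarity in the sense of \thref{def:SupermodularSimilarity}. The same two configurations establish the claim for any admissible parameters with $x + h < 1$.

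The enumeration of $f_X$ on the eight subsets of $V$ is routine; the one delicate point — and where a careless choice of $X$ fails — is that $f_X$ must attain the value $1$ at two points that are spread out in the lattice, with strictly smaller values both \emph{between} them and \emph{around} them. Choosing $X_2 = \emptyset$ leaves $f_X$ with a single spike at $\emptyset$ and produces only the non-submodularity obstruction, because the single dip created by toggling the element of $\{1,\dots,k\}$ behaves supermodularly; it is exactly the second value-$1$ point $\{2,3\}$, available only when $X_2\neq\emptyset$, that supplies the non-supermodularity obstruction. The remaining care is pure bookkeeping from \thref{def:CardinalityIntersectionSimilarity}: that $i$ depends on $X$ only while $j$ tracks $A_2$, and that the ``otherwise'' branch contributes $x + |X_1\setminus A_1|h$ rather than anything involving the last $n$ coordinates.
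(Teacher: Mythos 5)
Your proposal is correct and takes essentially the same approach as the paper: both exhibit explicit second-order-difference counterexamples for $f_X$ that exploit the jump to the value $1$ when the first-block coordinates agree and the last-block cardinalities match, versus the value $x+|X_1\setminus A_1|h$ otherwise. The only differences are cosmetic refinements --- you work out a closed form for $f_X$, use a minimal instance ($k=1$, $n=2$, with one violation using the element of $\{1,\dots,k\}$) instead of the paper's $n=4$ construction inside the last block, and make the nondegeneracy assumption $x+h<1$ explicit, which matches the paper's own assumption.
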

\begin{proof}
When constructing $f_X(A\cup \{b\})$ (cf.\ \thref{def:SupermodularSimilarity}),
we will focus on two cases: 
(i) changing an element of $Y \cap \{k+1,\dots,k+n\}$ makes $|Y \cap \{k+1,\dots,k+n\}| \neq |X \cap \{k+1,\dots,k+n\}|$, and (ii) changing an element of $Y \cap \{k+1,\dots,k+n\}$ makes $|Y \cap \{k+1,\dots,k+n\}| = |X \cap \{k+1,\dots,k+n\}|$.

Assume that $x + |X \cap Y \cap \{1,\dots,k\}| h < 1$ and that $X \cap \{1,\dots,k\} = Y \cap \{1,\dots,k\}$.
Consider $n=4$, $X \cap \{k+1,\dots,k+n\} = \{k+1,k+2\}$, $A = \{k+2\}$, $B = \{k+2,k+3\}$, $x = k+4$:
\begin{align}
f_X(A\cup \{x\}) -f_X(A) =& 1 - (x + |X \cap Y \cap \{1,\dots,k\}| h) \\
> f_X(B \cup \{x\}) - f_X(B) =& (x + |X \cap Y \cap \{1,\dots,k\}| h) - 1 .
\end{align}
We therefore conclude that the cardinality encoding of the intersection similarity is non-submodular.

Now consider $A = \emptyset$, $B = \{k+2\}$, and $x = k+4$:
\begin{align}
f_X(A\cup \{x\}) -f_X(A) =& (x + |X \cap Y \cap \{1,\dots,k\}| h) - 1\\
< f_X(B \cup \{x\}) - f_X(B) =& 1 - (x + |X \cap Y \cap \{1,\dots,k\}| h) .
\end{align}
And we conclude that the cardinality encoding of the intersection similarity is neither submodular nor supermodular.
\end{proof}

The cardinality encoding was chosen in order to map the domain of the intersection similarity from a tuple of a set and an arbitrary positive integer between $1$ and $n$ (as was originally defined in \cite{Chierichetti:2012:LFA:2095116.2095201}) to a set.  If $n$ is constrained to be a power of two, we may consider the following encoding:
\begin{definition}[Identity encoding of intersection similarity \cite{Chierichetti:2012:LFA:2095116.2095201}]\thlabel{def:IdentityIntersectionSimilarity}
For $k\in \mathbb{Z}_+$, $\log_2 n \in \mathbb{Z}_+$  and $x,h \in \mathbb{R}_+$ such that $0 \leq x \leq x+kh \leq 1$ the identity encoding of the intersection similarity $H_{x,h,k,n} : I_{k,n}^2 \rightarrow \mathbb{R}$ is given by
\begin{align}
I_{k,n} := \mathcal{P}(\{1,\dots,k+\log_2 n\}) ,
\end{align}
and
\begin{align}
H\left(X,Y\right) := \begin{cases}
x + |X \cap Y \cap \{1,\dots,k\}| h & \text{if } 
X \neq Y \\
1 & \text{otherwise.}
\end{cases}
\end{align}
\end{definition}
One may verify that \thref{def:CardinalityIntersectionSimilarity} and \thref{def:IdentityIntersectionSimilarity} encode the same similarity (provided $n$ is a power of two) under the equivalence relations
\begin{align}
S_{k,n} \ni A =& \left( A \cap \{1,\dots, k\}, |A \cap \{k+1,\dots, k+n\}| \right) , \\
I_{k,n} \ni A =& \left( A \cap \{1,\dots, k\}, \sum_{i=1}^{\log_2 n} [k+i \in A] \cdot 2^{i-1} \right) ,
\end{align}
and are thus both LSHable by \cite[Lemma~3.15]{Chierichetti:2012:LFA:2095116.2095201}.  We note that \thref{def:IdentityIntersectionSimilarity} yields a pseudometric if $x+kh=1$ (and a metric if $x+kh<1$), which can be verified by considering $X=Y=\{1,\dots,k\}$ and letting $X\setminus \{1,\dots,k\}$ vary arbitrarily, while \thref{def:CardinalityIntersectionSimilarity} always results in a pseudometric.

\begin{proposition}[Submodularity of the identity encoding of intersection similarity]\thlabel{thm:IdentityIntersectionSimilarity}
The identity encoding of the intersection similarity is  supermodular when taken with respect to the symmetric difference of its arguments.
\end{proposition}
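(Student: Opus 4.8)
The plan is to reduce the set function $f_X : A \mapsto H(X, X\triangle A)$ to an explicit closed form and then check the two conditions of \thref{def:SupermodularSimilarity} by inspection. Fix $X$, write $K := \{1,\dots,k\}$, $W := X\cap K$, and $c := |W|$. The only computation needed is the elementary identity $X\cap(X\triangle A) = X\setminus A$, which gives $|X\cap(X\triangle A)\cap K| = |(X\cap K)\setminus A| = c - |A\cap W|$; combining this with the fact that $X\triangle A = X$ exactly when $A=\emptyset$, \thref{def:IdentityIntersectionSimilarity} yields
\begin{equation*}
f_X(A) = \begin{cases} 1, & A = \emptyset,\\ x + \bigl(c - |A\cap W|\bigr)h, & A \neq \emptyset.\end{cases}
\end{equation*}
Thus away from $\emptyset$ the function $f_X$ coincides with the modular function $A \mapsto x + ch - h\sum_{i\in W}[i\in A]$, and the only non-modular feature is the single ``spike'' at the empty set, where $f_X(\emptyset)=1 \ge x + ch$.

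The inequality $x + ch \le 1$ is the pivot of the whole argument, and it is precisely where the hypothesis $0\le x\le x+kh\le 1$ enters: since $c = |X\cap K| \le k$ and $h\ge 0$, we have $x + ch \le x + kh \le 1$. Granting this, condition~\ref{item:SupermodularSimilarityMonotonicCondition} (monotone decrease) is immediate: for $A\subseteq B$ with $A\ne\emptyset$, $f_X(A) - f_X(B) = h\bigl(|B\cap W| - |A\cap W|\bigr)\ge 0$, and for $A=\emptyset$, $f_X(\emptyset)=1\ge f_X(B)$ for every $B$ (trivially if $B=\emptyset$, and otherwise because $f_X(B)\le x+ch\le 1$).

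For condition~\ref{symmetricDiffConditionSupermodular}, I would invoke the supermodular analogue of \thref{thm:Submodular2ndOrder} (apply it to $-f_X$) and verify $f_X(A\cup\{t\}) - f_X(A) \le f_X(A\cup\{s,t\}) - f_X(A\cup\{s\})$ for all $A\subset V$ and distinct $s,t\in V\setminus A$, splitting on whether $A=\emptyset$. When $A\ne\emptyset$, all four arguments are non-empty, $f_X$ is modular on them, and both sides equal $-h[t\in W]$, so the inequality holds with equality. When $A=\emptyset$, the right-hand side is still $-h[t\in W]$ while the left-hand side equals $f_X(\{t\}) - 1 = x + (c-[t\in W])h - 1$, so the desired inequality collapses exactly to $x + ch \le 1$, already in hand. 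These two cases are exhaustive, establishing supermodularity.

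The computations are routine; the one point that must not be overlooked is the discontinuity of $f_X$ at $\emptyset$ — this is the sole place where supermodularity (or monotonicity) could break, and it is controlled entirely by the normalization $x + kh \le 1$. This also pinpoints the contrast with \thref{thm:IntersectionSimilarity}: in the cardinality encoding, rearranging elements inside $\{k+1,\dots,k+n\}$ alters whether $i=j$ in a way that is not monotone in $A$, so no analogous modular-plus-spike description is available and both sub- and supermodularity fail.
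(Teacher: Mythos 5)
Your proof is correct and takes essentially the same route as the paper's: both observe that $f_X$ is modular on non-empty sets (so the difference-of-differences vanishes there) and isolate the empty-set case, where the supermodularity inequality reduces to $x + |X\cap\{1,\dots,k\}|\,h \leq 1$, guaranteed by the normalization $x + kh \leq 1$. Your explicit ``modular plus spike at $\emptyset$'' closed form and the extra verification of the monotonicity condition of \thref{def:SupermodularSimilarity} are tidy additions, but the core argument coincides with the paper's.
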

\begin{proof}
Consider $X\triangle Y \subset X\triangle \tilde{Y}$ and $d\notin X\triangle \tilde{Y}$.  First assume $X\triangle Y \neq \emptyset$:
\begin{align}
\underbrace{f_X((X\triangle Y)\cup \{d\})}_{=x+h|X \cap Y \cap \{1,\dots,k\} \setminus \{d\}|} -\underbrace{f_X(X\triangle Y)}_{= x+h|X \cap Y \cap \{1,\dots,k\}|} - \underbrace{f_X((X\triangle \tilde{Y}) \cup \{d\})}_{=x+h|X\cap \tilde{Y} \cap \{1,\dots,k\} \setminus \{d\}|} + \underbrace{f_X(X\triangle \tilde{Y})}_{=x+h|X\cap \tilde{Y} \cap \{1,\dots,k\}|} = 0.
\end{align}
Next assume $X\triangle Y = \emptyset$:
\begin{align}
\underbrace{f_X((X\triangle Y)\cup \{d\})}_{=x+h|X \cap Y \cap \{1,\dots,k\} \setminus \{d\}|} -\underbrace{f_X(X\triangle Y)}_{= 1} - \underbrace{f_X((X\triangle \tilde{Y}) \cup \{d\})}_{=x+h|X\cap \tilde{Y} \cap \{1,\dots,k\} \setminus \{d\}|} + \underbrace{f_X(X\triangle \tilde{Y})}_{=x+h|X\cap \tilde{Y} \cap \{1,\dots,k\}|} = \\ \nonumber
x-1+
h\left(|X \cap Y \cap \{1,\dots,k\} \setminus \{d\}|  \underbrace{- |X\cap \tilde{Y} \cap \{1,\dots,k\} \setminus \{d\}| + |X\cap \tilde{Y} \cap \{1,\dots,k\}|}_{=[d\in X]}\right) \\
= x+
h |X \cap Y \cap \{1,\dots,k\} |  -1 \leq 0.
\end{align}
\end{proof}

It therefore cannot be that supermodularity of an LSHable similarity is tied to $1-S$ being a proper metric as \thref{def:CardinalityIntersectionSimilarity} and \thref{def:IdentityIntersectionSimilarity} both yield pseudometrics when $x+kh=1$.  Importantly, \thref{thm:IntersectionSimilarity} makes use of the non-monotonicity of the similarity with respect to the symmetric difference, while \thref{def:IdentityIntersectionSimilarity} yields a similarity that is monotone in the symmetric difference of its arguments (holding one fixed).

\section{Locality Sensitive Hashability and Submodularity\label{sec:LSHS}}

It is of interest to characterize the relationship between submodularity of similarity measures with respect to the symmetric difference of their arguments, and locality sensitive hashability (LSHability) \cite{chierichetti_et_al:LIPIcs:2017:8168,Chierichetti:2012:LFA:2095116.2095201}.  Although LSHability has been studied for more general domains, we will be interested in the case where the domain remains $\mathcal{P}(V)$.

\begin{definition}[LSHability]\thlabel{def:LSHability}
An LSH for a similarity function $S : \mathcal{X} \times \mathcal{X} \rightarrow [0,1]$ is a probability distribution over a set $\mathcal{H}$ of hash functions definied on $\mathcal{P}(V)$  such that $P_{h\in\mathcal{H}}[h(A)= h(B)] = S(A,B)$. A similarity $S$ is LSHable if there is an LSH for $S$.
\end{definition}
In the sequel, we will frequently be concerned with the case that $\mathcal{X}=\mathcal{P}(V)$ for some base set $V$.

As an immediate corollary to \thref{thm:IntersectionSimilarity} and \cite[Lemma~3.15]{Chierichetti:2012:LFA:2095116.2095201}, which states that the intersection similarity is LSHable, we have:
\begin{corollary}\thlabel{thm:EncodingSupermodularOrNot}
Given a similarity $S$, $S$ LSHable does not imply that an encoding of $\mathcal{X}$ as a set leads to a supermodular set similarity following \thref{def:SupermodularSimilarity}. 
\end{corollary}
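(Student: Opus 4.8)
The plan is to observe that the statement is a direct consequence of two facts already established. First I would recall that the intersection similarity of \cite[Definition~3.14]{Chierichetti:2012:LFA:2095116.2095201} is LSHable by \cite[Lemma~3.15]{Chierichetti:2012:LFA:2095116.2095201}, and that \thref{def:CardinalityIntersectionSimilarity} provides a bona fide encoding of its domain as the power set $\mathcal{P}(\{1,\dots,k+n\})$, with $H_{x,h,k,n}$ computing exactly the original similarity under the stated correspondence $A \mapsto \bigl(A\cap\{1,\dots,k\},\, |A\cap\{k+1,\dots,k+n\}|\bigr)$. Hence the pair (intersection similarity, cardinality encoding) is a concrete instance of an LSHable similarity $S$ together with a set encoding of its domain $\mathcal{X}$.

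Then I would invoke \thref{thm:IntersectionSimilarity}, which shows that this very set function $f_X : A \mapsto H_{x,h,k,n}(X, X\triangle A)$ violates both the submodular inequality~\eqref{eq:SubmodularDefinitionInequality} and its reverse, using the explicit small configurations in $\{k+1,\dots,k+n\}$ exhibited there. In particular condition~\ref{symmetricDiffConditionSupermodular} of \thref{def:SupermodularSimilarity} fails, so this encoding is not a supermodular similarity in the sense of \thref{def:SupermodularSimilarity}. Combining the two observations yields the claim: there is an LSHable $S$ and a set encoding of $\mathcal{X}$ whose induced symmetric-difference set function is not supermodular, so LSHability of $S$ cannot imply that an arbitrary set encoding of $\mathcal{X}$ produces a supermodular set similarity.

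There is essentially no obstacle here beyond bookkeeping; the only point requiring a moment's care is confirming that \thref{def:CardinalityIntersectionSimilarity} is a faithful re-encoding of the domain of the original intersection similarity — so that it is genuinely \emph{the same} LSHable similarity at stake — which is precisely the content of the discussion surrounding the equivalence relations stated before \thref{def:IdentityIntersectionSimilarity} and only needs to be cited. Optionally I would close with a remark contrasting this with \thref{thm:IdentityIntersectionSimilarity}: a different encoding of the same similarity (valid when $n$ is a power of two) \emph{is} supermodular, so the failure of supermodularity is an artifact of the chosen encoding rather than of the underlying metric, which sharpens the message of the corollary and of key result~\ref{item:NonUniquenessEncoding}.
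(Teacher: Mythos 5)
Your argument is correct and is exactly the paper's: the corollary is stated there as an immediate consequence of \thref{thm:IntersectionSimilarity} together with the LSHability of the intersection similarity from \cite[Lemma~3.15]{Chierichetti:2012:LFA:2095116.2095201}, with the same closing contrast to \thref{thm:IdentityIntersectionSimilarity}. No gaps; your extra care about the cardinality encoding being a faithful re-encoding is a reasonable bookkeeping point that the paper handles in the discussion preceding \thref{def:IdentityIntersectionSimilarity}.
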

We note, however, that for $\mathcal{X}$ with size equal to a power of 2, we were able to find an encoding that leads to a supermodular set similarity following \thref{def:SupermodularSimilarity} (\thref{thm:IdentityIntersectionSimilarity}).  Under what conditions such a mapping exists is a central open question about the relationship between LSHability and submodular analysis.

\subsection{Metric Properties of Supermodular Similarities}\label{sec:MetricSupermodularSimilarities}

\begin{theorem}[\cite{Charikar2002SET}]
Let $S$ be a LSHable similarity, $1-S$ is a (pseudo)metric.
\end{theorem}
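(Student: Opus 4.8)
The plan is to exhibit the LSH distribution explicitly as a coupling and then read off the triangle inequality from a union-bound argument. By \thref{def:LSHability}, since $S$ is LSHable there is a probability distribution over a family $\mathcal{H}$ of hash functions on $\mathcal{P}(V)$ with $P_{h\in\mathcal{H}}[h(A)=h(B)] = S(A,B)$. Define $d(A,B) := 1-S(A,B) = P_{h\in\mathcal{H}}[h(A)\neq h(B)]$. I would first dispatch the easy axioms: non-negativity is immediate since $d$ is a probability; symmetry follows from symmetry of the event $\{h(A)\neq h(B)\}$ (equivalently from condition~2 of \thref{def:similarity}); and $d(A,A) = 1 - S(A,A) = 1 - 1 = 0$ by condition~1 of \thref{def:similarity}. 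This last point is exactly why we only get a \emph{pseudo}metric in general: nothing in \thref{def:LSHability} forces $h(A)=h(B)$ a.s.\ to imply $A=B$, so distinct points may be at distance zero.

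The substance is the triangle inequality $d(A,C) \leq d(A,B) + d(B,C)$. The key observation is the pointwise (per hash function $h$) implication: if $h(A)\neq h(C)$, then either $h(A)\neq h(B)$ or $h(B)\neq h(C)$ — because if both held as equalities we would have $h(A)=h(B)=h(C)$, a contradiction. Hence the event $\{h(A)\neq h(C)\} \subseteq \{h(A)\neq h(B)\}\cup\{h(B)\neq h(C)\}$. Taking probabilities under the LSH distribution and applying the union bound $P[E\cup F]\leq P[E]+P[F]$ gives
\begin{align}
d(A,C) = P_{h}[h(A)\neq h(C)] \leq P_{h}[h(A)\neq h(B)] + P_{h}[h(B)\neq h(C)] = d(A,B) + d(B,C),
\end{align}
which is the desired inequality.

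I do not expect a serious obstacle here; the only subtlety worth a sentence is making sure the same probability distribution over $\mathcal{H}$ is used simultaneously for all three pairs $(A,B)$, $(B,C)$, $(A,C)$ — this is automatic since \thref{def:LSHability} provides a single distribution over $\mathcal{H}$ that realizes $S$ on every pair of arguments at once, so the pointwise set inclusion above can be integrated against it directly. It then remains only to note that whether $d$ is a genuine metric or merely a pseudometric depends on the injectivity (up to a.s.\ equality of hashes) of the encoding, which is not guaranteed in general; the examples in \thref{thm:IntersectionSimilarity} and \thref{thm:IdentityIntersectionSimilarity} with $x+kh=1$ show the pseudometric case genuinely occurs.
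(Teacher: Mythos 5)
Your proof is correct: the paper states this result as a citation to Charikar (2002) without reproducing a proof, and your argument — writing $1-S(A,B)=P_h[h(A)\neq h(B)]$ and deriving the triangle inequality from the event inclusion $\{h(A)\neq h(C)\}\subseteq\{h(A)\neq h(B)\}\cup\{h(B)\neq h(C)\}$ via the union bound — is exactly the standard argument behind the cited result. Your remark on why only a pseudometric is obtained is also on point.
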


We analyze in this section whether being a supermodular similarity is sufficient to yield a (pseudo)metric.

\subsubsection{Supermodular Hamming Similarities}
\begin{definition}[Submodular Hamming metric \cite{NIPS2015_5741}]\thlabel{def:SubmodularHammingMetric}
Given a positive, monotone submodular set function $g$ s.t.\ $g(\emptyset)=0$, the corresponding submodular Hamming metric is
$d_g(X,Y) := g(X\triangle Y)$. 
\end{definition}
We see that apart from a normalization such that $d_g : \mathcal{P}(V)^2 \rightarrow [0,1]$, $S(X,Y) = 1-d_g(X,Y)$ is a special case of \thref{def:SupermodularSimilarity} in which there is no conditioning on $X$ in defining $g$.  We will assume in the sequel that $d_g$ is bounded by $1$.  Such normalization is trivial as we assume monotonicity of $g$.

\begin{definition}[Supermodular Hamming similarity]\thlabel{def:SupermodularHammingSimilarity}
A similarity $S$ is called a supermodular Hamming similarity if $S(X,Y) = 1-d_g(X,Y)$ for some submodular Hamming metric $d_g$.
\end{definition}

\begin{theorem}[\cite{NIPS2015_5741}]\thlabel{thm:SupermodularHammingSimilarityMetric}
For a supermodular Hamming similarity $S$, $1-S$ is a (pseudo)metric.
\end{theorem}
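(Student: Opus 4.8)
The plan is to unwind the definitions: by \thref{def:SupermodularHammingSimilarity} and \thref{def:SubmodularHammingMetric}, $1 - S(X,Y) = d_g(X,Y) = g(X\triangle Y)$ for some positive, monotone, submodular $g$ with $g(\emptyset)=0$. It therefore suffices to check that $d_g$ satisfies the pseudometric axioms. Non-negativity is immediate from the positivity of $g$; $d_g(X,X) = g(\emptyset) = 0$; and symmetry follows at once from $X\triangle Y = Y\triangle X$. (The statement only claims a pseudometric, not a metric, precisely because $g$ may vanish on nonempty sets, so $d_g(X,Y)=0$ need not force $X = Y$; if $g$ is strictly positive on nonempty sets one recovers a genuine metric.)

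The only real content is the triangle inequality $g(X\triangle Z) \le g(X\triangle Y) + g(Y\triangle Z)$. First I would invoke the elementary set containment $X\triangle Z \subseteq (X\triangle Y)\cup(Y\triangle Z)$: any element lying on exactly one side of the pair $(X,Z)$ must differ in membership from $Y$ in at least one of the two comparisons. Monotonicity of $g$ then gives $g(X\triangle Z) \le g\big((X\triangle Y)\cup(Y\triangle Z)\big)$.

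It remains to bound $g(A\cup B) \le g(A)+g(B)$ with $A = X\triangle Y$ and $B = Y\triangle Z$. This subadditivity is a standard consequence of submodularity: from the pairwise form of \thref{def:Submodular}, $g(A)+g(B) \ge g(A\cup B)+g(A\cap B)$, and since $g$ is positive we have $g(A\cap B)\ge 0$, whence $g(A)+g(B)\ge g(A\cup B)$. Chaining the three inequalities closes the triangle inequality and completes the proof.

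I do not anticipate a genuine obstacle here: every step is either an elementary computation with symmetric differences or a textbook property of submodular functions (subadditivity). The only mild point worth flagging is bookkeeping on the pseudometric-versus-metric distinction, which is exactly the failure of $d_g(X,Y)=0 \Rightarrow X=Y$ that the parenthetical ``(pseudo)'' in the statement already anticipates.
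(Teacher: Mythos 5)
Your proof is correct and follows essentially the same route as the paper: the key containment $X\triangle Z \subseteq (X\triangle Y)\cup(Y\triangle Z)$, monotonicity, and the lattice form of submodularity together with non-negativity (subadditivity). The only cosmetic difference is that you argue directly with $g$, whereas the paper phrases the same steps in terms of $f = 1-g$.
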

\begin{proof}
All properties of a metric are immediate except for the triangle inequality.
Denote $f = 1-g$.
\begin{align}
1-S(X,Z) \leq 1-S(X,Y) + 1-S(Y,Z)  ,
\end{align}
is equivalent to
\begin{align}
\label{eq:HammingMetricInequalitySimilarity}
 f(X\triangle Y) + f(Y \triangle Z) \leq f(X\triangle Z) +1 .
\end{align}
The following generalization of the triangle inequality to the symmetric difference holds
\begin{align}\label{eq:TriangleInequalitySymmetricDifference}
X\triangle Z \subseteq (X\triangle Y) \cup (Y\triangle Z)
\end{align}
which by the monotonicity of $f$ implies
\begin{align}
f(X\triangle Z) \geq f((X\triangle Y) \cup (Y\triangle Z)).
\end{align}
From the supermodularity of $f$
\begin{align}
f(X\triangle Y) + f(Y\triangle Z) \leq \underbrace{f((X\triangle Y)\cup (Y\triangle Z))}_{\leq f(X\triangle Z)}+ \underbrace{f((X\triangle Y)\cap (Y\triangle Z))}_{\leq 1}
\end{align}
and the desired result follows. 
\end{proof}

A supermodular Hamming similarity requires a supermodular function $f$ such that $f(\emptyset) = 1$, $f$ is monotonically decreasing and non-negative.  We may easily construct such an $f$  from an arbitrary supermodular, respectively submodular, function and a non-negative monotonically increasing modular function.  For an arbitrary submodular function, begin by taking its negative to obtain a supermodular function.
\begin{proposition}\thlabel{thm:SupermodularHammingSimilarityConstruction}
For arbitrary supermodular $g$ and  non-negative, modular increasing $m$ such that $g(V) - g(\emptyset) - \sum_{i\in V} \left(g(\{i\}) - g(\emptyset)\right) + m(V) \neq 0$,\footnote{The condition $g(V) - g(\emptyset) - \sum_{i\in V} \left(g(\{i\}) - g(\emptyset)\right) + m(V) \neq 0$ is satisfied whenever $g$ non-modular or $m$ non-zero.}
\begin{align}\label{eq:SupermodularToSupermodularSimilarityCanonical}
f(X) = \frac{g(V\setminus X) - g(\emptyset) - \sum_{i\in V\setminus X} \left(g(\{i\}) - g(\emptyset)\right) + m(V\setminus X)}{g(V) - g(\emptyset) - \sum_{i\in V} \left(g(\{i\}) - g(\emptyset)\right) + m(V)}
\end{align}
is supermodular, monotonically decreasing, non-negative, and has the property that $f(\emptyset)=1$. Furthermore, all such functions having these properties can be obtained by Equation~\eqref{eq:SupermodularToSupermodularSimilarityCanonical}.
\end{proposition}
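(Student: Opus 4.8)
The plan is to reduce everything to two elementary facts about supermodular functions via a change of variables. Write
$h(X) := g(X) - g(\emptyset) - \sum_{i \in X}\bigl(g(\{i\}) - g(\emptyset)\bigr)$,
so that $h$ is supermodular (it is $g$ minus a modular function), $h(\emptyset)=0$, and $h(\{i\})=0$ for every singleton, and set $D := h(V)+m(V)$, the denominator of Equation~\eqref{eq:SupermodularToSupermodularSimilarityCanonical}; then $f(X) = \bigl(h(V\setminus X) + m(V\setminus X)\bigr)/D$. The first structural fact I would isolate is: a supermodular function $h$ with $h(\emptyset)=0$ and $h(\{i\})=0$ for all $i$ is non-negative and monotonically non-decreasing. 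This is immediate from the supermodular inequality with base set $\emptyset$, namely $h(A\cup\{x\}) - h(A) \geq h(\{x\}) - h(\emptyset) = 0$, so building $X$ up one element at a time gives monotonicity and $h(X)\geq h(\emptyset)=0$. The modular, non-negative, increasing function $m$ enjoys the analogous properties for free.

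For the forward direction I would then verify the four asserted properties in order. First, $f(\emptyset) = D/D = 1$. Second, since $h\geq 0$ and $m\geq 0$ the numerator $h(V\setminus X)+m(V\setminus X)$ and $D = h(V)+m(V)$ are both non-negative, so the hypothesis $D\neq 0$ upgrades to $D>0$ and hence $f\geq 0$; this is also where I would dispatch the footnote, since if $g$ is non-modular then $h\not\equiv 0$ and a non-negative monotone function that is not identically zero has $h(V)>0$, while if $m$ is non-zero then $m(V)>0$, and in either case $D>0$. Third, monotonic decrease: for $A\subseteq B$ one has $V\setminus B\subseteq V\setminus A$, and both $h$ and $m$ are monotone non-decreasing, so $h(V\setminus A)+m(V\setminus A) \geq h(V\setminus B)+m(V\setminus B)$, i.e.\ $f(A)\geq f(B)$. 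Fourth, supermodularity: the complement map $X\mapsto h(V\setminus X)$ is supermodular whenever $h$ is (apply $h(S)+h(T)\leq h(S\cup T)+h(S\cap T)$ with $S=V\setminus A$, $T=V\setminus B$), the map $X\mapsto m(V\setminus X)$ is modular, and dividing their sum by the positive constant $D$ preserves supermodularity.

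For the converse, given any $f$ that is supermodular, monotonically decreasing, non-negative, with $f(\emptyset)=1$, I would set $\phi(X):=f(V\setminus X)$; then $\phi$ is supermodular, monotonically non-decreasing, non-negative, and $\phi(V)=1$. Peel off its singleton-modular part by defining $m(X) := \phi(\emptyset) + \sum_{i\in X}\bigl(\phi(\{i\}) - \phi(\emptyset)\bigr)$, which is modular, non-negative and increasing because $0\leq\phi(\emptyset)\leq\phi(\{i\})$, and set $g := \phi - m$, which is supermodular. Because $g$ already vanishes at $\emptyset$ and at all singletons, the function $h$ built from this $g$ by the formula above equals $g$ itself, and $h(V)+m(V) = \phi(V) = 1 \neq 0$, so the admissibility condition holds; substituting into Equation~\eqref{eq:SupermodularToSupermodularSimilarityCanonical} yields numerator $h(V\setminus X)+m(V\setminus X) = \phi(V\setminus X) = f(X)$ over denominator $1$, recovering $f$.

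The bookkeeping is routine; the only real content is the ``normalized supermodular $\Rightarrow$ non-negative and monotone'' fact used in both directions, together with the observation that the passage from $g$ to $h$ is idempotent on functions already normalized at $\emptyset$ and the singletons, which is exactly what makes the converse work with the simple choice $g=\phi-m$. I do not expect a genuine obstacle beyond being careful that ``$\neq 0$'' becomes ``$>0$'' before any division, and that complementation and ``subtract the singleton-modular part'' both preserve supermodularity.
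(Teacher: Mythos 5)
Your proposal is correct and follows essentially the same route as the paper: normalize $g$ by subtracting its singleton-determined modular part, use that a supermodular function vanishing at $\emptyset$ and all singletons is non-negative and non-decreasing, note that reflection $X \mapsto V\setminus X$ preserves supermodularity and flips monotonicity, and for the converse peel off the singleton-modular part of $f(V\setminus\cdot)$ exactly as the paper does with $\hat{m}$ and $\hat{g}$. The only differences are cosmetic: you prove the monotonicity fact inline (the paper cites an external lemma) and you make explicit that the denominator is strictly positive before dividing, which the paper leaves implicit.
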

\begin{proof}
\emph{Supermodularity:} $g$ is supermodular, and Equation~\ref{eq:SupermodularToSupermodularSimilarityCanonical} only multiplies by positive scalars, adds modular functions, and applies a reflection (i.e.\ replacing $X$ with $V\setminus X$).  Supermodularity is closed under each of these operations.

\emph{Non-negativity and monotonicity:} We first subtract $g(\emptyset)$ to canonically normalize the set function.  $g(X) - g(\emptyset) - \sum_{i\in X} \left(g(\{i\}) - g(\emptyset)\right)$ is supermodular and equal to zero for $X$ equal to any singleton set $\{i\}$,
which by \cite[Lemma~1]{Blaschko2016a} implies the expression is monotonically increasing.  Summation with a monotonically increasing $m$ maintains this property.  Division by $g(V) - g(\emptyset) - \sum_{i\in V} \left(g(\{i\}) - g(\emptyset)\right) + m(V)$ ensures all values are bounded between zero and one.  This remains unchanged when applying a reflection, and the reflection of a monotonically increasing  set function is a monotonically decreasing set function. 

Moreover
\begin{equation}f(\emptyset) = \frac{g(V\setminus \emptyset) - g(\emptyset) - \sum_{i\in V\setminus \emptyset} \left(g(\{i\}) - g(\emptyset)\right) + m(V\setminus \emptyset)}{g(V) - g(\emptyset) - \sum_{i\in V} \left(g(\{i\}) - g(\emptyset)\right) + m(V)} = 1;
\end{equation}
all supermodular, monotonically decreasing, non-negative functions with $f(\emptyset)=1$ can be obtained by Equation~\eqref{eq:SupermodularToSupermodularSimilarityCanonical} for some $g$ and $m$: Let $\hat{f}$ be such a function.  Define $\hat{m}(\emptyset) = \hat{f}(V)$ and $\hat{m}(\{i\}) = \hat{f}(V\setminus \{i\})$, $\forall i$. This uniquely determines the modular function $\hat{m}$ \cite[Equation~(44.2)]{Schrijver2003combinatorial}.  Next define $\hat{g}(X) = \hat{f}(V\setminus X) - \hat{m}(X) \implies \hat{f}(X) = \hat{g}(V\setminus X) + \hat{m}(V\setminus X)$.
\begin{align}
\frac{\hat{g}(V\setminus X) - \overbrace{\hat{g}(\emptyset)}^{=0} - \sum_{i\in V\setminus X} (\overbrace{\hat{g}(\{i\})}^{=0} - \overbrace{\hat{g}(\emptyset)}^{=0}) + \hat{m}(V\setminus X)}{\hat{g}(V) - \hat{g}(\emptyset) - \sum_{i\in V} \left(\hat{g}(\{i\}) - \hat{g}(\emptyset)\right) + \hat{m}(V)} = \frac{\hat{g}(V\setminus X) + \hat{m}(V\setminus X)}{\underbrace{\hat{g}(V) + \hat{m}(V)}_{=\hat{f}(\emptyset)=1}}=\hat{f}(X).
\end{align}
\end{proof}

\begin{remark}
We note that the family of supermodular Hamming similarities is substantially smaller than the family of supermodular similarities following \thref{def:SupermodularSimilarity}.  We may observe for example that the Jaccard index complies with \thref{def:SupermodularSimilarity} but not with \thref{def:SupermodularHammingSimilarity}.
\end{remark}
We therefore consider next similarities that comply with the more general \thref{def:SupermodularSimilarity}.

\subsubsection{Metric Properties of General Supermodular Similarities}

\begin{proposition}\thlabel{thm:SupermodularityNotImpliesMetric}
That a similarity $S$ satisfies \thref{def:SupermodularSimilarity} does not imply $1-S$ is a (pseudo)metric.
\end{proposition}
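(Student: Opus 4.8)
The plan is to prove \thref{thm:SupermodularityNotImpliesMetric} by exhibiting an explicit counterexample on a ground set of size two. The key preliminary observation is that, since $S$ is a similarity in the sense of \thref{def:similarity} and takes values in $[0,1]$, the dissimilarity $d:=1-S$ automatically satisfies $d(X,X)=0$, $d(X,Y)=d(Y,X)$, and $d\geq 0$; hence the only pseudometric axiom that can fail is the triangle inequality. So it suffices to produce a supermodular similarity (in the sense of \thref{def:SupermodularSimilarity}) together with sets $X,Y,Z$ for which $d(X,Z)>d(X,Y)+d(Y,Z)$. Note that $|V|\leq 1$ cannot furnish such an example, since any nonnegative symmetric function vanishing on the diagonal of a two-point set already satisfies the triangle inequality; so $|V|=2$ is the minimal candidate.

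I would then fix $V=\{1,2\}$ and define $S$ directly on the six unordered off-diagonal pairs, with $S(X,X)=1$ in all cases: set $S(\emptyset,\{1\})=S(\{1\},\{1,2\})=\tfrac45$, $\;S(\emptyset,\{2\})=S(\{1\},\{2\})=S(\{2\},\{1,2\})=\tfrac35$, and $S(\emptyset,\{1,2\})=\tfrac12$. Equivalently, $S(X,Y)=1-g_X(X\triangle Y)$ where each $g_X:=1-f_X$ is the polymatroid rank function of $f_X$ (cf.\ the discussion following \thref{def:SupermodularSimilarity}); concretely, listing values on $(\emptyset,\{1\},\{2\},\{1,2\})$, one has $g_\emptyset=(0,\tfrac15,\tfrac25,\tfrac12)$, $\;g_{\{1\}}=(0,\tfrac15,\tfrac15,\tfrac25)$, $\;g_{\{2\}}=(0,\tfrac25,\tfrac25,\tfrac25)$, and $g_{\{1,2\}}=(0,\tfrac25,\tfrac15,\tfrac12)$. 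These values are chosen so that the compatibility relation $g_X(A)=g_{X\triangle A}(A)$ holds (making $S$ well-defined and symmetric), each $g_X$ is a valid monotone submodular function with $g_X(\emptyset)=0$ bounded by one, and the ``long edge'' $g_\emptyset(\{1,2\})$ exceeds the sum of two ``short edges''.

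The verification then reduces to three routine finite checks: (i) $S(X,X)=1$ and $S(X,Y)=S(Y,X)$ hold by inspection of the table; (ii) for each of the four $X$, the set function $f_X:A\mapsto S(X,X\triangle A)$ is monotonically decreasing and supermodular — on a two-element ground set the latter is the single inequality $f_X(\{1\})+f_X(\{2\})\leq f_X(\emptyset)+f_X(\{1,2\})$, i.e.\ $g_X(\{1\})+g_X(\{2\})\geq g_X(\{1,2\})$, which one reads off the displayed values of $g_X$ (e.g.\ $\tfrac15+\tfrac25\geq\tfrac12$ for $g_\emptyset$) — so $S$ is a supermodular similarity; (iii) taking $X=\emptyset$, $Y=\{1\}$, $Z=\{1,2\}$ gives $d(X,Z)=g_\emptyset(\{1,2\})=\tfrac12$ while $d(X,Y)+d(Y,Z)=g_\emptyset(\{1\})+g_{\{1\}}(\{2\})=\tfrac15+\tfrac15=\tfrac25<\tfrac12$, so $1-S$ violates the triangle inequality and is not even a pseudometric.

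The only genuine obstacle is producing such an assignment at all: the identity $g_X(A)=g_{X\triangle A}(A)$ couples the four functions $g_\emptyset,g_{\{1\}},g_{\{2\}},g_{\{1,2\}}$ — each value is shared by exactly two of them — so one cannot simply prescribe a non-metric dissimilarity and hope the induced $f_X$ are supermodular and monotone. Tracing through the forced identifications together with the per-function monotonicity and submodularity constraints shows that the long edge $g_\emptyset(\{1,2\})$ is squeezed into a narrow window, strictly above $g_\emptyset(\{1\})+g_{\{1\}}(\{2\})$ but bounded above; the numbers above realize a feasible point in that window, after which everything is mechanical verification.
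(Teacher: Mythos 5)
Your proof is correct and follows essentially the same strategy as the paper: an explicit counterexample on $|V|=2$, checking that each $f_X$ is monotone decreasing and supermodular (a single inequality per fixed $X$) and then exhibiting a triple violating the triangle inequality. The paper's counterexample is a one-parameter family (with $S(\{1\},\{2\})=0$, $S(\{1\},\{1,2\})=2\gamma$, etc., for $\gamma\in(0,1/3]$) rather than your fixed numerical table, but the substance is identical and your numbers check out.
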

\begin{proof}
We demonstrate a counterexample for $|V|=2$. Overloading notation so that $S \in \mathbb{R}^{4\times 4}$, and indexing the rows and columns such that $S(X,Y) = S_{i, j}$ where $i = 1+[1\in X] + 2\cdot [2\in X]$ and $j=1+[1\in Y] + 2\cdot [2\in Y]$,
\begin{align}
S = \begin{pmatrix}
  1 & \gamma & \gamma  & \gamma\\
   \gamma &  1 &  0 &  2 \gamma\\
   \gamma & 0 &  1  & 1 - \gamma \\
   \gamma & 2 \gamma & 1 - \gamma &  1
\end{pmatrix},
\end{align}
and for a given $\gamma \in [0, 1/3]$ 
we show that $S$ is symmetric, monotonic, and that each column is supermodular following \thref{def:SupermodularSimilarity}:
\begin{align}
f_{X}(\emptyset) =& 1,\ \forall X, \\
f_{\emptyset}(\{1\}) =& f_{\emptyset}(\{2\}) = f_{\emptyset}(\{1,2\})= \gamma,\\
f_{\{1\}}(\{1\}) =& \gamma,\ f_{\{1\}}(\{2\}) =  2\gamma,\ f_{\{1\}}(\{1,2\}) =   0, \\
   f_{\{2\}}(\{1\})=& 1 - \gamma,\ f_{\{2\}}(\{2\})=  \gamma,\ f_{\{2\}}(\{1,2\})=  0\\
  f_{\{1,2\}}(\{1\}) =& 1 - \gamma,\ f_{\{1,2\}}(\{2\}) = 2 \gamma,\ f_{\{1,2\}}(\{1,2\}) = \gamma.
\end{align}

However, as soon as $\gamma$ is positive, $1-S$ does not satisfy the triangle inequality: indeed, we then have 
$(1-S(\{1\},\{2\})) - (1-S(\{1\},\{1,2\})) - (1-S(\{1,2\},\{2\})) = 
1 - (1 - 2\gamma) - \gamma = \gamma > 0$.
\end{proof}

As \thref{def:SupermodularSimilarity} does not imply that the similarity yields a metric, we propose the following definition that explicitly enforces the triangle inequality:
\begin{definition}[Metric supermodular similarity]\thlabel{def:MetricSupermodularSimilarity}
A metric supermodular similarity is a supermodular similarity (\thref{def:SupermodularSimilarity}) that additionally satisfies that for all $X,Y,Z\subseteq V$:
\begin{align}
 f_X(X\triangle Y) + f_Y(Y\triangle Z) \leq f_X(X\triangle Z) +1 .
\end{align}
\end{definition}
As this metric property is necessary, but not sufficient for LSHability, it remains to be demonstrated under what conditions a similarity being a metric supermodular similarity is sufficient to guarantee LSHability.

\subsection{LSH-Preserving Functions are Supermodularity-Preserving Functions}\label{sec:LSHpreservingIsSupermodularPreserving}

It is well known that the class of LSH-preserving functions is the set of probability generating functions \cite[Section~3]{Chierichetti:2012:LFA:2095116.2095201}.

\begin{definition}[LSH-preserving function]
A function $f : [0,1) \rightarrow [0,1]$ is LSH-preserving if $f \circ S$ is LSHable whenever $S$ is LSHable.
\end{definition}

\begin{definition}[Probability generating function]
A function $f(x)$ is a probability generating function (PGF) if there is a probabilty distribution $\{p_i\}_{0\leq i < \infty}$ such that $f(x) = \sum_{i=0}^{\infty} p_i x^i$ for $x\in [0,1]$.
\end{definition}

\begin{theorem}[Theorem 3.1 \cite{Chierichetti:2012:LFA:2095116.2095201}]\thlabel{thm:lshPreserving}
A function $f : [0,1) \rightarrow [0,1]$ is LSH-preserving iff there are a PGF $p$ and a scalar $\alpha \in [0,1]$ such that $f(x) = \alpha p(x)$.
\end{theorem}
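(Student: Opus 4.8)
The plan is to prove the two implications separately: the ``if'' direction by a direct hashing construction, the ``only if'' direction via the analytic theory of absolutely monotone functions. For ($\Leftarrow$), suppose $f(x)=\alpha\,p(x)$ with $p(x)=\sum_{i\ge 0}p_i x^i$ a PGF and $\alpha\in[0,1]$, and let $(\mathcal H,\mu)$ be an LSH for $S$, so $P_{h\sim\mu}[h(A)=h(B)]=S(A,B)$. I would build a randomized hash $g$ for $f\circ S$ by drawing, independently, a bit that is $1$ with probability $\alpha$, an index $i\sim\{p_i\}$, and functions $h_1,\dots,h_i\sim\mu$ i.i.d.; if the bit is $1$ output the concatenation $g(A)=(h_1(A),\dots,h_i(A))$ (the empty tuple when $i=0$), and if the bit is $0$ output an injective tag of $A$ so that distinct inputs never collide. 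By independence of the $h_j$, for $A\ne B$ one gets $P[g(A)=g(B)]=\alpha\sum_{i\ge 0}p_i\,S(A,B)^i=\alpha\,p(S(A,B))=f(S(A,B))$, while self-collisions have probability $1$; the possibly unbounded tuple lengths are harmless. When $\alpha=0$, $f\equiv 0$ and the injective tag alone suffices.

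For ($\Rightarrow$), suppose $f$ is LSH-preserving; the goal is to show $f$ is absolutely monotone on $[0,1)$, since a classical representation theorem then gives $f(x)=\sum_{i\ge 0}a_i x^i$ with all $a_i\ge 0$, after which $\sum_i a_i=\lim_{x\to 1^-}f(x)\le 1$ forces $\alpha:=\sum_i a_i\in[0,1]$ and exhibits $f=\alpha p$ with $p:=f/\alpha$ a PGF (and $f\equiv 0$ trivially when $\alpha=0$). The heart of the matter is to produce, for every order $m$ and every admissible argument vector, an LSHable ``witness'' similarity whose image under $f$ can be LSHable only if the corresponding $m$-th forward difference of $f$ is non-negative. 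I would use that a similarity on a finite ground set is LSHable iff its matrix lies in the partition polytope (the convex hull of equivalence-relation $0/1$ matrices), equivalently iff one minus it is a cut-type pseudometric. A flexible supply of witnesses comes from the product (``random coordinate subset'') construction: on $\mathcal P([m])\cong\{0,1\}^m$, hashing $U$ to $U\cap T$ for a random $T$ that includes each coordinate independently with probability $1-r$ is an LSH with collision probability $r^{|U\triangle V|}$; restricting attention to the ``staircase'' elements $\emptyset,\{1\},\{1,2\},\dots,[m]$ yields the LSHable Toeplitz matrix $\big(r^{|i-j|}\big)_{0\le i,j\le m}$. Applying $f$ entrywise must keep this (and its submatrices, and mixtures of such witnesses over different $r$) inside the partition polytope, and unwinding that membership condition across a suitable family yields non-negativity of every finite difference of $f$. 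Finally, boundedness of $f$ together with the second-order inequalities makes $f$ convex, hence continuous on $[0,1)$, so the finite-difference conditions upgrade to absolute monotonicity.

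I expect the main obstacle to be precisely this middle step of ($\Rightarrow$): selecting enough LSHable witness similarities and, above all, translating the condition ``$f$ applied entrywise stays in the partition polytope'' into the full hierarchy of finite-difference inequalities on $f$. This is the combinatorial--analytic core of the argument and is directly parallel to the hard direction of Schoenberg's theorem characterising the functions that preserve positive semidefiniteness in all dimensions; it is exactly what is established in \cite{Chierichetti:2012:LFA:2095116.2095201}.
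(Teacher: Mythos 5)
This statement is imported by the paper from \cite{Chierichetti:2012:LFA:2095116.2095201} and is not proved in the paper itself, so the only question is whether your blind reconstruction is a complete proof. Your ``if'' direction is correct and is the standard construction (with probability $\alpha$ concatenate an $i$-tuple of i.i.d.\ hashes with $i$ drawn from $\{p_i\}$, otherwise hash injectively), and your witness family is genuinely LSHable: $h_T(U)=U\cap T$ with coordinates kept independently does give collision probability $r^{|U\triangle V|}$, hence the Toeplitz matrices $\bigl(r^{|i-j|}\bigr)$.

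The gap is in the ``only if'' direction, and you name it yourself: the entire content of that direction is the step you defer, namely showing that if $f$ applied entrywise to these (and related) LSHable witnesses always stays LSHable, then every finite difference of $f$ of every order is non-negative. You do not exhibit the linear functionals (valid on all partition matrices) whose evaluation at $f\bigl(r^{|i-j|}\bigr)$ produces the $m$-th order differences, nor any substitute argument; instead you write that this ``is exactly what is established in \cite{Chierichetti:2012:LFA:2095116.2095201}'', i.e.\ you appeal to the source of the theorem for its own hard direction, which makes the attempt circular rather than a proof. In addition, the parenthetical you offer as an alternative handle --- that a finite similarity is LSHable ``iff one minus it is a cut-type pseudometric'' --- is false as an equivalence: $\ell_1$/cut embeddability of $1-S$ is only a necessary condition (cf.\ \cite{Charikar2002SET} and Lemma~5.1 of \cite{Chierichetti:2012:LFA:2095116.2095201}, as this paper itself notes), while LSHability is membership in the strictly smaller convex hull of equivalence-relation matrices. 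So if the missing middle step were attempted via the cut-cone characterization it would be unsound; it has to be carried out against the partition-polytope characterization directly, and until that extraction of the finite-difference inequalities is actually done (followed by the absolute-monotonicity representation and the normalization $\sum_i a_i\le 1$), the hard direction remains unproved.
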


We now show that LSH-preserving functions are supermodularity-preserving functions.

\begin{lemma}\thlabel{lemma:productNonNegMonotonicSupermodular}
Let $f$ and $g$ be two non-negative supermodular functions, both non-increasing or both non-decreasing. The product function $fg$ is supermodular. 
\end{lemma}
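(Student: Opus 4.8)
The plan is to check supermodularity of $fg$ through the second-order difference criterion: by \thref{thm:Submodular2ndOrder} applied to the submodular function $-(fg)$, it suffices to prove that for every $A\subseteq V$ and all $s,t\in V\setminus A$,
\[
(fg)(A) + (fg)(A\cup\{s,t\}) \;\ge\; (fg)(A\cup\{s\}) + (fg)(A\cup\{t\}).
\]
I would first dispose of the ``both non-decreasing'' case for free. The reflection $X\mapsto V\setminus X$ preserves supermodularity and non-negativity (as already used in the proof of \thref{thm:SupermodularHammingSimilarityConstruction}), sends a non-increasing function to a non-decreasing one and conversely, and commutes with pointwise products; since it is an involution, the non-decreasing case reduces to the non-increasing one. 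So assume $f,g$ are non-increasing.

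Fix $A$ and $s,t\notin A$ and abbreviate $f_0=f(A)$, $f_s=f(A\cup\{s\})$, $f_t=f(A\cup\{t\})$, $f_{st}=f(A\cup\{s,t\})$, and likewise $g_0,g_s,g_t,g_{st}$. Introduce the marginal decrements and the supermodular surplus
\[
p = f_0-f_s,\qquad q = f_0-f_t,\qquad r = (f_0+f_{st}) - (f_s+f_t),
\]
and the analogous $p',q',r'$ for $g$. Monotonicity gives $p,q\ge 0$, supermodularity of $f$ gives $r\ge 0$, the inequalities $f_{st}\le f_s$ and $f_{st}\le f_t$ translate to $r\le\min(p,q)$, and non-negativity $f_{st}\ge 0$ reads $f_0-p-q+r\ge 0$, i.e.\ $f_0-p-q = f_{st}-r \ge -r$; the same four statements hold with primes for $g$.

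The core of the argument is then a direct expansion: substituting $f_s=f_0-p$, $f_t=f_0-q$, $f_{st}=f_0-p-q+r$ (and the primed analogues for $g$) into $(fg)(A)+(fg)(A\cup\{s,t\}) - (fg)(A\cup\{s\}) - (fg)(A\cup\{t\})$ and simplifying, the quadratic-in-$f_0,g_0$ terms cancel and one is left with
\[
r\bigl(g_0-p'-q'\bigr) + r'\bigl(f_0-p-q\bigr) + pq' + qp' + rr'.
\]
Using $g_0-p'-q' = g_{st}-r'\ge -r'$ and $f_0-p-q = f_{st}-r\ge -r$ (both from non-negativity of $g$ and $f$), this expression is at least $-2rr' + pq'+qp' + rr' = pq'+qp'-rr'$, which is non-negative since $p\ge r\ge 0$ and $q'\ge r'\ge 0$ force $pq'\ge rr'$, and $qp'\ge 0$. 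This gives the desired second-order inequality, hence $fg$ is supermodular.

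I expect the main obstacle to be purely organizational: tracking the sign constraints ($r\le\min(p,q)$, $f_{st}\ge 0$, and their $g$-counterparts) and pinning down exactly which are needed — the essential ones being $f_{st},g_{st}\ge 0$, which lower-bound the two cross terms $r(g_0-p'-q')$ and $r'(f_0-p-q)$, and $p\ge r$, $q'\ge r'$, which absorb $rr'$. The algebraic cancellation leading to the five-term expression is routine but error-prone, so it is worth sanity-checking on a small ground set such as $|V|=2$. Note that both hypotheses are genuinely used: without a common monotonicity direction (or without non-negativity) the product need not be supermodular, which is why the lemma is stated as it is and why it is sufficient for the intended application to products of PGF terms in \thref{thm:LSHpreservingIsSupermodularPreserving}.
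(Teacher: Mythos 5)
Your proof is correct, and it rests on the same basic strategy as the paper --- verifying the second-order difference criterion for $fg$ --- but the execution is genuinely different in its bookkeeping. The paper establishes a discrete product rule for the operators $D_s, P_s$ and applies it twice, arriving at the four-term identity \eqref{eq:secondOrderProduct} in which \emph{every} term is individually non-negative: $(D_tD_sf)(P_tP_sg)\ge 0$ and $f\,(D_tD_sg)\ge 0$ by supermodularity plus non-negativity, while the two cross terms are products of first-order differences of a common sign, so both monotonicity cases are handled at once. Your expansion in the marginal variables $p,q,r,p',q',r'$ is algebraically the same identity (indeed $D_tD_sf=r$, $D_sf=-p$, $D_tP_sg=r'-q'$, and so on), but your grouping yields terms such as $r(g_0-p'-q')=r(g_{st}-r')$ that can be negative, so you need the extra absorption step $pq'\ge rr'$ (via $r\le p$, $r'\le q'$) together with the lower bounds $f_{st},g_{st}\ge 0$, and, because your sign conventions assume non-increasing marginals, the preliminary reflection $X\mapsto V\setminus X$ to reduce the non-decreasing case --- a reduction that is valid (reflection preserves supermodularity, non-negativity, and pointwise products, and flips monotonicity) but that the paper's grouping renders unnecessary. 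In short, your route is a fully elementary hand computation with explicit marginals and a small inequality chase; the paper's route trades that for a Leibniz-rule decomposition whose terms are non-negative one by one and which needs no case reduction. Both arguments use exactly the same hypotheses, and yours is complete as written.
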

\begin{proof}
Let $A \subset V$. For any set function $f$, and an element $t \in A$, we denote
\begin{equation}D_t f = f(A \cup \{t\}) - f(A)\end{equation} the first-order difference of the set function, and 
\begin{equation}P_t f = f(A \cup \{t\}).\end{equation}

Note for the remaining of the proof that operators $(D_u)_{u \in V}$, $(P_v)_{v \in V}$ are distributive over addition and that their application is associative and commutative.

Some easy arithmetic shows that the second-order condition of submodularity~\eqref{eq:SubmodularSecondOrder} is equivalent to
\begin{equation}
f \text{ submodular } \iff D_t D_s f \leq 0 \quad \forall{A \subset V} \quad \forall{s, t \in V \setminus A}
\end{equation}
and similarly, supermodularity reduces to the condition $D_t D_s f \geq 0$. 

One can easily prove the product rule
\begin{equation}
D_s(fg) = (D_s f)(P_s g) + (f)(D_s g).
\end{equation}
Applying this formula a second time (to the products in the RHS), we get that
\begin{equation}\label{eq:secondOrderProduct}
D_t D_s(fg) = D_t D_s f P_t P_s g + D_s f D_t P_s g + D_t f P_t D_s g + f D_t D_s g.
\end{equation}
Supposing $f$ and $g$ are non-negative, supermodular and both nondecreasing or both nonincreasing. Then we can show that each of the terms in the RHS of~\eqref{eq:secondOrderProduct} are positive, which shows the second-order condition of supermodularity of $fg$. Indeed,
\begin{itemize}
\item $(D_t D_s f) (P_t P_s g) \geq 0$ by supermodularity of $f$ and non-negativity of $g$
\item $(D_s f) (D_t P_s g)$ is a product of first-order differences, which have same sign by assumption of shared monoticity; the same holds for $(D_t f) (P_t D_s g)$
\item $f (D_t D_s g) \geq 0$ by non-negativity of $f$ and supermodularity of $g$.
\end{itemize}
\end{proof}

\begin{proposition}[LSH-preserving functions are supermodularity-preserving functions]\thlabel{thm:LSHpreservingIsSupermodularPreserving}
Given an LSH-preserving function $f : [0,1) \rightarrow [0,1]$ and a non-negative monotonically decreasing supermodular function $g$ such that $g(\emptyset) = 1$, $f\circ g$ is a non-negative monotonically decreasing supermodular function with $f \circ g(A) \in [0,1]$ for all $A\subseteq V$.
\end{proposition}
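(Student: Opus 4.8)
The plan is to expand $f$ using its structural characterization and then verify the four conclusions term by term. By \thref{thm:lshPreserving} there is a probability generating function $p(x)=\sum_{i=0}^{\infty}p_i x^i$ (so $p_i\ge 0$, $\sum_i p_i=1$) and a scalar $\alpha\in[0,1]$ with $f=\alpha p$. Although $f$ is stated on $[0,1)$, the representation $\alpha p$ extends continuously to $x=1$, which is what we need since $g$ attains the value $g(\emptyset)=1$; from now on we work with this extension. Observe first that $0\le g(A)\le g(\emptyset)=1$ for every $A\subseteq V$ by non-negativity and monotone decrease of $g$, so
\[
(f\circ g)(A)=\alpha\sum_{i=0}^{\infty}p_i\,g(A)^i=\alpha\,p\bigl(g(A)\bigr)
\]
is a convergent series (dominated by $\sum_i p_i=1$); since a PGF maps $[0,1]$ into $[0,1]$ and $\alpha\in[0,1]$ we get $(f\circ g)(A)\in[0,1]$, which also yields non-negativity. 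For monotone decrease: $p$ is non-decreasing on $[0,1]$ (its derivative $\sum_i i p_i x^{i-1}$ is non-negative, and $p$ is continuous at $1$), while $g$ is non-negative and monotonically decreasing, so $p\circ g$ is monotonically decreasing, and multiplication by $\alpha\ge 0$ preserves this.

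The real content is supermodularity, for which the tool is \thref{lemma:productNonNegMonotonicSupermodular}. First I would prove by induction on $i\ge 0$ that the pointwise power $g^i$ (with $g^0\equiv 1$) is non-negative, monotonically decreasing, and supermodular. The cases $i=0$ (a constant function, hence modular) and $i=1$ (the hypothesis on $g$) are immediate. For the inductive step write $g^{i}=g^{i-1}\cdot g$: both factors are non-negative, monotonically decreasing, and supermodular (the first by the induction hypothesis, the second by assumption), so they share monotonicity and \thref{lemma:productNonNegMonotonicSupermodular} applies to give supermodularity of $g^i$; non-negativity and monotone decrease of the product are elementary (for $A\subseteq B$ one has $g^{i-1}(A)g(A)\ge g^{i-1}(B)g(A)\ge g^{i-1}(B)g(B)\ge 0$). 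Hence each $\alpha p_i g^i$ is a non-negatively weighted supermodular function.

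Finally I would pass from the termwise statement to the infinite sum via the second-order characterization \thref{thm:Submodular2ndOrder}. For $A\subset V$ and $s,t\in V\setminus A$, supermodularity of a set function $h$ is equivalent to $h(A\cup\{s,t\})-h(A\cup\{s\})-h(A\cup\{t\})+h(A)\ge 0$; denoting this quantity $\Delta_{s,t}h$, linearity over finite partial sums together with convergence of all four evaluated series gives
\[
\Delta_{s,t}(f\circ g)=\alpha\sum_{i=0}^{\infty}p_i\,\Delta_{s,t}\bigl(g^i\bigr)\ge 0,
\]
each summand being non-negative. By \thref{thm:Submodular2ndOrder} this is precisely supermodularity of $f\circ g$, completing the proof. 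The one point that genuinely needs justification — and hence the main obstacle — is this interchange of the second-order difference with the infinite summation: since $V$ is finite we cannot in general truncate $p$ to a polynomial, but the interchange is legitimate because each of the four arguments lies in $[0,1]$, so every series involved converges absolutely and the finitely many limits may be combined termwise.
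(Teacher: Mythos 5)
Your proposal is correct and follows essentially the same route as the paper: decompose $f=\alpha p$ via \thref{thm:lshPreserving}, establish supermodularity, monotonicity, and non-negativity of the powers $g^i$ by induction using \thref{lemma:productNonNegMonotonicSupermodular}, and conclude by closure of these properties under non-negative (convex) combinations. Your write-up merely spells out details the paper leaves implicit (the extension to the argument value $1$ and the termwise passage to the infinite series), so it is the same proof in a more careful form.
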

\begin{proof}
The following properties are closed under convex combination:
supermodularity, monotonicity, non-negativity.  By definition we require that $f(\emptyset)=1$, which is why the definition of a LSH-preserving function is over the domain $[0,1)$.

It therefore only remains to show that if $g$ satisfies the above conditions, that $g^i$ for an arbitrary non-negative integer also satisfies the desired properties.
Monotonicity and non-negativity is closed under non-negative exponentiation.
That supermodularity is preserved under non-negative integer exponentiation is a straightforward consequence of \thref{lemma:productNonNegMonotonicSupermodular}.
\end{proof}

\thref{thm:LSHpreservingIsSupermodularPreserving} implies that functions that are both supermodular with respect to the symmetric difference and LSHable will remain so after LSH-preserving transformations.  This indicates that there is at least a special class of supermodular LSHable functions that is closed under such transformations.  

\subsection{Cardinality-Based Supermodular Hamming Similarities}\label{sec:SymmetricSupermodularHammingSimilarities}

In Sections~\ref{sec:SubmodularityofSetSimilarity} and~\ref{sec:LSHS}, we have put into light various connections between metric supermodular similarities and LSHable similarities. 
In particular, the intersection between these two families of set functions contains many common supermodular similarities, as summarized in Table~\ref{tab:setSimilarityMeasures}. 
Moreover, LSH-preserving transformations also preserve the supermodularity and metricity of supermodular metric similarities (\thref{thm:LSHpreservingIsSupermodularPreserving}).
We are however far from a complete characterization of this intersection. 
In this section we consider a more restricted setting where the analysis of submodularity reduces to convex analysis. 

\begin{definition}[Cardinality-based set functions]\thlabel{def:CSF}
A set function $F$ over a base set $V$ is said to be cardinality-based if there exists $g: \mathbb{R}_+ \rightarrow \mathbb{R}$ such that 
$
F(A) = g(|A|)
$ for all $A \in \mathcal{P}(V)$.
\end{definition}

\begin{proposition}[Supermodular cardinality-based set functions (proof similar to {\cite[Prop.~6.1]{Bach13fot}})]\thlabel{thm:supermodularConvex}
A cardinality-based set function $A \mapsto g(|A|)$ is supermodular iff $g$ is convex.
\end{proposition}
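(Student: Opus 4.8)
The plan is to connect supermodularity of a cardinality-based set function to convexity of the generating function by applying the second-order difference characterization of supermodularity (the supermodular analogue of \thref{thm:Submodular2ndOrder}). Since $F(A) = g(|A|)$ depends only on $|A|$, every second-order difference of $F$ will reduce to a finite second difference of $g$ evaluated at consecutive integers, and convexity of $g$ on $\mathbb{R}_+$ will be seen to be equivalent to nonnegativity of all such second differences.

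First I would fix $A \subseteq V$ and two distinct elements $s, t \in V \setminus A$, and write out the quantity appearing in the (negated) inequality~\eqref{eq:SubmodularSecondOrder}: namely
\begin{align}
F(A \cup \{s,t\}) - F(A\cup\{s\}) - F(A\cup\{t\}) + F(A) = g(k+2) - 2g(k+1) + g(k),
\end{align}
where $k := |A|$. By the supermodular version of \thref{thm:Submodular2ndOrder}, $F$ is supermodular iff this expression is nonnegative for every admissible $A, s, t$, i.e.\ iff $g(k+2) - 2g(k+1) + g(k) \geq 0$ for all integers $k$ with $0 \le k \le |V| - 2$. The remaining work is to show this discrete second-difference condition is equivalent to convexity of $g$ on $\mathbb{R}_+$.

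For the ``if'' direction, convexity of $g$ immediately gives $g(k+1) \le \tfrac{1}{2}(g(k) + g(k+2))$, which is exactly the needed inequality. For the ``only if'' direction, one must be slightly careful: a priori $g$ is only constrained on the integers $0, 1, \dots, |V|$, so ``$g$ convex'' should be read as ``there is a choice of $g$ (equivalently, the piecewise-linear interpolant of the values $g(0), \dots, g(|V|)$) that is convex,'' and the midpoint inequality at consecutive integers is precisely the condition for that interpolant to be convex. I expect the main (minor) obstacle to be stating this equivalence cleanly — pinning down that the relevant notion of convexity is midpoint convexity on the integer grid, which for a piecewise-linear interpolant coincides with genuine convexity on $\mathbb{R}_+$ — rather than any real computational difficulty; the set-function side of the argument is a one-line reduction once the second-order characterization is invoked.
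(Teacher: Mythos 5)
Your proposal is correct and follows essentially the same route as the paper intends (its proof is deferred to the argument of \cite[Prop.~6.1]{Bach13fot}): apply the second-order difference characterization of \thref{thm:Submodular2ndOrder} (with the inequality reversed for supermodularity), observe that for a cardinality-based function every such difference collapses to the discrete second difference $g(k+2)-2g(k+1)+g(k)$, and identify nonnegativity of these with convexity of $g$ (read, as you note, as convexity of the interpolant of the values on $\{0,\dots,|V|\}$). No gaps; your remark on the integer-grid reading of convexity is the only subtlety and you handle it appropriately.
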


Following this observation, we consider the following restricted subset of supermodular metric similarities: 
\begin{definition}[Cardinality-based supermodular Hamming similarities]\thlabel{thm:CSHM}A similarity $S$ is in the set $\mathit{CSHS}$ of cardinality-based supermodular Hamming similarities iff it can be written as $S(X, Y) = h(|X\triangle Y|)$ where $h$ is positive, non-increasing, such that $h(0) = 1$, and convex. 
\end{definition}

Following \thref{thm:supermodularConvex} it is clear that the set $\mathit{CSHS}$ is a subset of the set of supermodular Hamming similarities defined by \thref{def:SupermodularHammingSimilarity}.

We see that the Hamming similarity
\begin{equation}\label{def:hamming}
H(X, Y) = 1-\frac{|X\triangle Y|}{|V|} = h(|X\triangle Y|)\quad \text{where } h(x) = 1 - \frac{x}{V}
\end{equation}
is in $\mathit{CSHS}$.
Therefore, the set $\mathit{LSHP\circ H}$ of LSH-preserving transformations applied to the Hamming similarity is an LSHable subset of $\mathit{CSHS}$. 
The following proposition shows that this is a strict inclusion:
\begin{proposition}\thlabel{thm:SSHMbiggerthanLSHP-H}
The set CSHS of cardinality-based supermodular Hamming similarities is strictly larger than the set of LSH-preserving functions composed with Hamming $\mathit{LSHP\circ H}$.
\end{proposition}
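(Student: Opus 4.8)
The inclusion $\mathit{LSHP\circ H}\subseteq\mathit{CSHS}$ has already been noted above, so the plan is only to exhibit one cardinality-based supermodular Hamming similarity that is \emph{not} a probability generating function applied to $H$. The mechanism I want to exploit is the contrast between the two families: membership in $\mathit{CSHS}$ is a finite, local condition on the $|V|+1$ numbers $h(0),\dots,h(|V|)$ (positivity, non-increase, $h(0)=1$, non-negative second differences, by \thref{thm:CSHM} and \thref{thm:supermodularConvex}), whereas a probability generating function $p(x)=\sum_{i\ge0}p_ix^i$ is a rigid analytic object, so that agreement of two of its values at distinct points forces it to be constant.

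Write $n=|V|$ and assume $n\ge2$ (for $n\le1$ the two families coincide trivially). By \thref{thm:lshPreserving}, and since requiring $f\circ H$ to be a similarity forces the scalar $\alpha=1$ (Abel's theorem gives $\lim_{x\to1^-}\alpha p(x)=\alpha$, which must match the diagonal value $1$), every element of $\mathit{LSHP\circ H}$ has the form $(X,Y)\mapsto p\!\left(1-|X\triangle Y|/n\right)$ for some PGF $p$. The candidate similarity is $S^{\ast}(X,Y):=h(|X\triangle Y|)$ with $h(0)=1$ and $h(k)=\tfrac12$ for $1\le k\le n$.

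First I would verify $S^{\ast}\in\mathit{CSHS}$: $h$ is positive and non-increasing, $h(0)=1$, and its first differences are $(-\tfrac12,0,\dots,0)$, hence non-decreasing, so $h$ is convex; thus $S^{\ast}$ is a cardinality-based supermodular Hamming similarity by \thref{thm:CSHM}. Next, suppose for contradiction that $S^{\ast}(X,Y)=p\!\left(1-|X\triangle Y|/n\right)$ for a PGF $p$. Taking pairs with $|X\triangle Y|=1$ and $|X\triangle Y|=2$ (both possible since $n\ge2$) gives $p(1-\tfrac1n)=p(1-\tfrac2n)=\tfrac12$; since $1-\tfrac1n>1-\tfrac2n\ge0$, the identity $p(1-\tfrac1n)-p(1-\tfrac2n)=\sum_{i\ge1}p_i\left((1-\tfrac1n)^i-(1-\tfrac2n)^i\right)$ is a sum of non-negative terms equal to zero, forcing $p_i=0$ for all $i\ge1$, i.e.\ $p\equiv p_0$. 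But then $1=h(0)=p(1)=p_0$ while simultaneously $\tfrac12=h(1)=p(1-\tfrac1n)=p_0$, a contradiction. Hence $S^{\ast}\in\mathit{CSHS}\setminus\mathit{LSHP\circ H}$, which proves the strict inclusion.

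The argument is essentially routine; the one point deserving care is the convention for composing an LSH-preserving $f\colon[0,1)\to[0,1]$ with $H$ at equal arguments, which I handle by pinning $\alpha=1$ (equivalently, by extracting the contradiction from the PGF normalisation $p(1)=1$ rather than from the diagonal entry). I would add the remark that any non-constant, convex, non-increasing $h$ taking a repeated value in $(0,1)$ works equally well, so that the gap between the two families reflects genuine analytic rigidity of probability generating functions rather than a boundary artefact.
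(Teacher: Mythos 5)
There is a genuine gap, and it sits exactly at the step you yourself flagged as delicate: pinning $\alpha=1$. The composition of an LSH-preserving $f=\alpha p$ with a similarity is not required to be continuous at the diagonal. Both in this paper (see the proof of \thref{thm:LSHpreservingIsSupermodularPreserving}, where the value of the composed function at $\emptyset$ is set to $1$ by convention) and in the Chierichetti--Kumar framework, the diagonal value of $f\circ S$ remains $1$ and $f$ is only ever evaluated at off-diagonal values; this is precisely why LSH-preserving functions are defined on $[0,1)$ and why the scaling $\alpha<1$ is admissible at all (an LSH for the $\alpha$-scaled similarity mixes in, with probability $1-\alpha$, a hash that separates all distinct points, and this hash still collides with probability $1$ on identical inputs). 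Your Abel-limit argument therefore does not force $\alpha=1$; it would instead force every $\alpha<1$ transformation out of the comparison, which contradicts the paper's assertion that $\mathit{LSHP\circ H}$ is an LSHable subset of $\mathit{CSHS}$.

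With $\alpha$ left free, your candidate $S^{\ast}$ is not a counterexample: take $p\equiv 1$ (the PGF of the point mass at $0$) and $\alpha=\tfrac12$, so that $f\equiv\tfrac12$ is LSH-preserving by \thref{thm:lshPreserving}; then $f\circ H$ is exactly $S^{\ast}$, namely $1$ on the diagonal and $\tfrac12$ for all $X\neq Y$ (concretely, it is LSHable by drawing the identity hash with probability $\tfrac12$ and a constant hash otherwise). Hence $S^{\ast}\in\mathit{LSHP\circ H}$ and the strictness argument collapses; the same fate befalls any CSHS whose off-diagonal values are constant. Your underlying idea of testing membership through the finitely many attained values $p(1-k/n)$, $1\le k\le n$ --- rather than merely showing the outer function is not a PGF --- is sound, and in that one respect more careful than the paper's own argument, but it must be run against the full class $\{\alpha p\}$ and with an $h$ that is non-constant off the diagonal. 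The paper instead exhibits $f(x)=\tfrac32 x^{2}-\tfrac12 x^{3}$, which has a negative series coefficient and so is not of the form $\alpha p$, and composes it with the Hamming similarity.
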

\begin{proof}
Consider
\begin{equation}
f(x) = \frac{3}{2}x^2 - \frac{1}{2}x^3.
\end{equation}
Having negative coefficients in its series expansion, $f$ is not a PGF, and therefore not LSH-preserving following  \thref{thm:lshPreserving}. 
However $f \circ H$, where $H$ is the Hamming similarity, is a cardinality-based supermodular Hamming similarity: with $h$ defined as in~\eqref{def:hamming}, $f\circ h$ is convex by composition, decreasing, positive, and $f(h(0)) = 1$.
\end{proof}

\subsection*{Acknowledgments}

This work is funded by Internal Funds KU Leuven, an Amazon Research Award, and the Research Foundation~--~Flanders (FWO) through project number G0A2716N.  The genesis of this work was the 2017 Data-driven Algorithmics meeting in Bertinoro.  We thank Andreas Krause and Yaron Singer for its organization, and all participants for stimulating discussions on LSHability and submodular analysis.

\bibliographystyle{abbrv}
\bibliography{biblio}

\end{document}